\newcommand{\f}{\phi_s} 
\newcommand{\1}{\mathds{1}} 
\newcommand{\vp}{\varphi} 
\newcommand{\supp}{\mathrm{supp}}
\newcommand{\mR}{\mathbb{R}} 
\newcommand{\mZ}{\mathbb{Z}}
\newcommand{\pp}{\mathbb{P}}
\newcommand{\E}{\mathbb{E}}
\newcommand{\A}{[0,\alpha^*)}
\newcommand{\e}{\varepsilon}
\newcommand{\iid}{\stackrel{\text{iid}}{\sim}}
\theoremstyle{definition}
\newtheorem{definition}{Definition}
\theoremstyle{plain}
\newtheorem{thm}{Theorem}
\theoremstyle{plain}
\theoremstyle{plain}
\newtheorem{lemma}{Lemma}
\theoremstyle{plain}
\theoremstyle{plain}
\theoremstyle{plain}
\theoremstyle{remark}
\newtheorem{remark}{Remark}
\theoremstyle{plain}
\def\Cble{{\cal C}}
\def\de{{\rm d}}
\def\prob{\mathbb{P}}
\begin{document}

\title{On the concentration of the number of solutions\\
of random satisfiability formulas}

\author{Emmanuel Abbe\thanks{Department of Computer and Communication Sciences, EPFL, Switzerland. Email: emmanuel.abbe@epfl.ch} \;\;\; and\;\;\;
Andrea Montanari\thanks{Department of Electrical Engineering,
and Department of Statistics,
Stanford University. Email: montanari@stanford.edu}}

\date{}

\maketitle


\begin{abstract}
Let $Z(F)$ be the number of solutions of a random $k$-satisfiability 
formula $F$ 
with $n$ variables and clause density $\alpha$. 
Assume that the probability that $F$ is unsatisfiable is 
$O(1/\log(n)^{1+\e})$ for $\e>0$.
We show that (possibly excluding a countable set of 
`exceptional' $\alpha$'s) there exists a non-random function
$\phi(\alpha)$ such that, for any $\delta>0$,
$(1/n)\log Z(F)\in [\phi-\delta,\phi+\delta]$ with high probability. 
In particular, the assumption holds for all $\alpha<1$,
which proves the above concentration claim in the whole 
satisfiability regime of random $2$-SAT. We also extend
these results to a broad class of constraint satisfaction problems.

The proof is based on an interpolation technique from spin-glass
theory, and on an application of Friedgut's theorem on sharp thresholds for graph properties.
\end{abstract}
%
%
\section{Introduction and main results}

Over the last twenty years, a considerable effort
has been devoted to understanding the typical properties
of random $k$-satisfiability ($k$-SAT) instances. 
This line of work was initially motivated by the empirical 
discovery of a striking relation between the
running time of standard solvers and the 
proximity to the `satisfiability threshold'.
Since then, an important motivation for these investigations
has been to develop better heuristics to cope
with hard constraint satisfaction problems.

Significant progress has been made along this path. 
In particular, it was estabilished early on that, indeed, the probability
that a random instance is unsatisfiable increases sharply from $0$ 
to $1$ when the `clause density' (number of clauses per variable)
crosses a critical threshold \cite{Friedgut}. 
This phenomenon is referred to as the satisfiability phase transition
or satisfiability threshold.
The critical density might a priori depend on the number of variables,
although upper and lower bounds are known to match up to a 
multiplicative constant that goes to $1$ when $k$ increases \cite{AchlioetalNature}.

On the other hand, a significantly more detailed 
picture has been conjectured, building on non-rigorous techniques
from statistical physics such as the replica and cavity methods
\cite{Monasson99Nature,MezParZec_science,OurPNAS,MMBook}.
In particular, not only an $n$-independent 
critical density is conjectured to exist,
but explicit values (depending on $k$) were computed in 
\cite{MezParZec_science}. This type of analysis also lead to
an intriguing picture of the geometry of the set of solutions
of a random satisfiability instance
\cite{MezParZec_science,OurPNAS,MMBook}. While only a small 
subset of these results have been estabilished rigorously,
they provided guidance and stimulus for exciting rigorous
developments \cite{CojaAchlioptas}.

In this paper we explore the most basic property 
of the set of solutions of a random $k$-SAT instance:
its cardinality. 
The problem of computing the number of solution of a $k$-satisfiability 
($k$-SAT) formula is well known to be $\#$P-complete for 
any $k\ge 2$. Even worse, there is no fully polynomial 
randomized approximation scheme (FPRAS) to approximapte
the number of solutions  unless NP = RP \cite{DyerComplexity}. 

Here we are instead interested in asymptotic estimates of the number of 
solutions of a random formula, in the following sense. 
Let $F(n,\alpha)$ denote a formula on 
$n$ variables with clause density $\alpha$ (each clause having $k$ literals) and let
$Z(F(n,\alpha))$ be the number of solution of $F(n,\alpha)$. 
From a statistical physics viewpoint, one of the most basic conjectures 
is that $Z(F(n,\alpha))$  concentrates on the exponential scale.
Namely, for each $\alpha<\alpha_{\rm s}(k)$ 
(the satisfiability threshold), there exists $\phi = \phi(\alpha)$ non-random
such that, for any $\delta>0$,
$2^{n(\phi-\delta)}\le Z(F(n,\alpha)) < 2^{n(\phi+\delta)}$ with high probability.
In formula
\begin{eqnarray}
\lim_{n\to\infty}
\pp\{2^{n(\phi-\delta)}\le Z(F(n,\alpha)) < 2^{n(\phi+\delta)}\} = 1\, .
\label{eq:Conjecture}
\end{eqnarray}
While for arbitrary $k \geq 3$ we cannot establish this conjecture for any $\alpha$ below the satisfiable threshold (which would imply
in particular the existence of an $n$-independent
satisfiability threshold), we are able to prove
that (\ref{eq:Conjecture}) holds for any $\alpha$ such that 
 $\pp\{Z(F(n,\alpha))=0\}$ (the unsatisfiability probability)
is upper bounded by $1/(\log n)^{1+\e}$ for some $\e>0$ and all $n$ large enough. 
In particular, this establishes the 
conjecture for $k=2$ (in the entire satisfiable phase), and for $k\ge 3$ and $\alpha<1$. 
A verification of previous assumption for arbitrary values of $k$ and $\alpha$ up to the so-called `dynamic threshold' is also discussed. 

We further generalize these results to a broad family
of constraint satisfaction problems.

One basic difficulty in estabilishing (\ref{eq:Conjecture})
is that the concentration of $\log Z(F)$ cannot be proved using 
standard martingale methods. Such an argument typically requires
to control the difference $|\log Z(F')-\log Z(F)|$ for
$F$ and $F'$ differing in a single clause \cite{Ledoux}.
Unfortunately, adding a single clause can change the value of
$Z(F)$ from exponentially large to $Z(F') = 0$.

In order to overcome this difficulty, we use Friedgut's theorem to
prove that the property $Z(F)\le 2^{n\phi}$ has a sharp threshold.
We then translate this result into a concentration statement by proving
that $(1/n)\E\log(1+Z(F))$ converges, using an interpolation technique.
This is where the condition on the decay of the unsatisfiability probability is used, since the interpolation technique does not lead to a formal superadditivity property in this setting. 
 
The interpolation method we use was first developed by
Francesco Guerra and Fabio Toninelli in the context of spin glass theory
\cite{GuerraToninelliLimit} and then generalized to a number of problems
from statistical physics, computer science
\cite{FranzLeone,FranzLeoneToninelli,PanchenkoTalagrand}
and coding theory \cite{Montanari05,Kudekar}.
A nice suite of combinatorics applications were
recently presented in \cite{BayatiInterpolation}.

In particular, the interpolation technique was applied to random $k$-SAT
in \cite{FranzLeone,FranzLeoneToninelli}, to show that a suitable normalized log-partition function
has a limit for $n\to\infty$.
The same work implies that the minimum number of unsatisfied 
clauses per variable has a limit as well. 
Let us emphasize a crucial difference between such results and
the conjecture (\ref{eq:Conjecture}): 
The quantities (normalized log-partition function or optimal cost)
considered in  \cite{FranzLeone,FranzLeoneToninelli}
and following-up work are easily proved to
concentrate via martingale methods. This considerably simplifies those proofs.

\vspace{0.2cm}

{\bf Acknowlegements.} 
The present research was initiated as far back as 2006, and remained dormant 
for a long period of time. Some results were presented at the
`DIMACS Working Group on Message-Passing Algorithms' in October 2008.
We were finally motivated to polish and publish the manuscript
after discussions with Mohsen Bayati, David Gamarnik and Prasad Tetali
regarding their recent paper  \cite{BayatiInterpolation}. It is a pleasure 
to thank them. The first author also thanks Emre Telatar for a careful reading
of some parts of the manuscript and for useful suggestions. 

This work was partially supported by a Terman fellowship,
the NSF CAREER award CCF-0743978 and the NSF grants DMS-0806211,
CCF-0915145.
%
%
\section{Random $k$-SAT}

\begin{definition}
A $k$-clause is a disjunction of $k$ Boolean variables or their negations. 
Let $C_k(n)$ be the set of all $ N={n \choose k} 2^k $ possible 
$k$-clauses on $n$ Boolean variables. 
We denote by $F_k(n,\alpha)$ a random formula which is formed by selecting 
independently each element in $C_k(n)$ with probability $p_k(n,\alpha)=
\alpha n/({n \choose k} 2^k)$, and by taking the conjunction of 
the selected clauses. 
\end{definition}

The number of clauses in the above model is a binomial random variable,
which concentrates exponentially fast around its expectation $\alpha n$.
Some of our computations prove to be simpler within slightly
different models, whereby the number of clauses is either
Poisson or deterministic with the same mean  $\alpha n$. 
Standard monotonicity arguments can be used to show 
the equivalence of these models for our purposes and we will hence 
switch freely between these different models. 

Unless specified, the value of $k$ will remain fixed throughout the paper 
and the $k$ subscript is dropped. 

We denote by $Z(F)$ the number of satisfying assignments   (solutions) of a 
Boolean formula $F$ and by
\begin{eqnarray*}
P_n(\alpha,\phi) := \pp\{Z(F(n,\alpha)) < 2^{n\phi}\}\, ,
\end{eqnarray*}
the probability that a random formula has less than $2^{n\phi}$ 
satisfying assignments.

\begin{definition}
Note that $P_n(\alpha,0)= 
\pp\{Z(F(n,\alpha)) =0 \}$ is the probability that $F(n,\alpha)$ 
is unsatisfiable (UNSAT). We define
\begin{eqnarray*}
\alpha^* := \sup \big\{ \,\alpha :\; 
P_n(\alpha,0) = O(1/(\log n)^{1+\e}), \mbox{ for some }\e>0 \big\}\, . 
\end{eqnarray*}
\end{definition}

\noindent
{\bf Remark 1.} the results in this paper still hold when defining 
$\alpha^*$ to be $\sup \{ \alpha : \sum_n P_n(\alpha,0)/n 
< \infty\}$. However, 
the above definition simplifies the proofs without much loss of generality.

\noindent
{\bf Remark 2.} For any $k\ge 2$, we have $\alpha^* \geq 1$. 
Indeed, considering the case of $2$-SAT, \cite{DeLaVega}
proves that, for $\alpha < 1$, $\pp \{Z(F_2(n,\alpha)) =0\}=O(1/ n)$.
Since for any $k \geq 3$, $\pp \{Z(F_k(n,\alpha)) =0\} 
\leq \pp \{Z(F_2(n,\alpha)) =0\}$, we conclude 
$$\alpha^* \geq 1$$
for $k\ge 3$ as well. 

Unfortunately, the bounds on the satisfiability threshold
based on the second moment method 
\cite{AchlioetalNature} do not imply any quantitative estimate
on the probability that $F_k(n,\alpha)$ is UNSAT.
It might be possible to prove such an estimate by a careful
analysis of specific solution algorithms. In particular,
a careful analysis of the recent algorithm \cite{Coja09} 
might lead to a proof of 
$\alpha^* \geq 2^k(1-\delta) \log k/k$, for $k$ large enough \cite{COPersonal}. 
We expect that $\alpha^*$ does coincide with the satisfiability 
threshold. 

\vspace{0.2cm}

Our main result estabilishes the conjecture (\ref{eq:Conjecture})
for $\alpha<\alpha^*$, apart possibly for countably many
`special' values of $\alpha$.
\begin{thm}\label{res}
There exist a countable set $\Cble\subseteq[0,\alpha^*)$,
and $\phi_s: [0,\alpha^*)\to [0,1]$
such that the following holds.
For any $\e>0$ and any $\alpha \in [0,\alpha^*)\setminus\Cble$, we have
\begin{eqnarray*} 
\lim_{n \rightarrow \infty} P_n(\alpha,\phi_s(\alpha)- \varepsilon) &=& 0, \\
\lim_{n \rightarrow \infty} P_n(\alpha,\phi_s(\alpha) + \varepsilon) &=& 1.
\end{eqnarray*}
\end{thm}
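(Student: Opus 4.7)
The plan is to define a candidate limit $\phi_s(\alpha)$ for the normalized log-number of solutions, use the Guerra--Toninelli interpolation to establish its existence on $[0,\alpha^*)$, and then upgrade expectation-level convergence to a probabilistic concentration statement via Friedgut's sharp threshold theorem. Concretely, I set
\[
\phi(\alpha)\,:=\,\lim_{n\to\infty}\frac{1}{n}\,\E\log_2\bigl(1+Z(F(n,\alpha))\bigr),\qquad \phi_s(\alpha):=\phi(\alpha),
\]
and let $\Cble$ be the (at most countable) set of discontinuities of $\phi$.

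The first step is to show that this limit exists on $[0,\alpha^*)$, using the Guerra--Toninelli interpolation adapted to random $k$-SAT as in \cite{FranzLeone,FranzLeoneToninelli}. The interpolation gives that the sequence $a_n(\alpha):=\E\log_2(1+Z(F(n,\alpha)))$ is approximately superadditive, $a_{n+m}(\alpha)\geq a_n(\alpha)+a_m(\alpha)-R(n,m)$, with an error $R(n,m)$ arising from the $+1$ inside the logarithm---it can be bounded by the unsatisfiability probability of the sub-formulas produced by the interpolation. The hypothesis $P_n(\alpha,0)=O((\log n)^{-1-\e})$ is exactly what is needed to make these errors $o(n+m)$, so that a Fekete-type argument delivers the limit. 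Since $Z$ is stochastically non-increasing in the clause set, $\phi$ is non-increasing in $\alpha$, and therefore continuous on $[0,\alpha^*)$ outside an at most countable set $\Cble$.

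Concentration around $\phi_s$ is then extracted from Friedgut's theorem on sharp thresholds for monotone, symmetric clause properties. For each fixed $\phi$, the event $A_\phi=\{Z(F)\leq 2^{n\phi}\}$ is monotone increasing in the clause set and invariant under permutations and literal flips of the variables, so Friedgut's theorem yields a critical density $\alpha_c^{(n)}(\phi)$ with transition window of width $o(1)$. Fix $\alpha\in[0,\alpha^*)\setminus\Cble$ and $\e>0$, and suppose for contradiction that $P_{n_k}(\alpha,\phi_s(\alpha)+\e)\leq 1-c$ for some $c>0$ along a subsequence $n_k$. Sharpness of the threshold forces $P_{n_k}(\alpha',\phi_s(\alpha)+\e)\to 0$ for every fixed $\alpha'<\alpha$ along the subsequence, i.e.\ $Z(F(n_k,\alpha'))>2^{n_k(\phi_s(\alpha)+\e)}$ with probability tending to $1$. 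Combined with the trivial bound $Z\leq 2^n$ this gives
\[
\frac{1}{n_k}\,\E\log_2\bigl(1+Z(F(n_k,\alpha'))\bigr)\,\geq\,\bigl(\phi_s(\alpha)+\e\bigr)\bigl(1-o(1)\bigr),
\]
hence $\phi(\alpha')\geq\phi_s(\alpha)+\e$. Letting $\alpha'\uparrow\alpha$ and invoking continuity of $\phi$ at $\alpha\notin\Cble$ yields $\phi(\alpha)\geq\phi(\alpha)+\e$, a contradiction. The other limit $P_n(\alpha,\phi_s(\alpha)-\e)\to 0$ is obtained symmetrically: take $\alpha'>\alpha$, use Friedgut to force $Z(F(n_k,\alpha'))\leq 2^{n_k(\phi_s(\alpha)-\e)}$ w.h.p.\ along the subsequence, and use $Z\leq 2^n$ on the complementary event to upper bound $a_{n_k}(\alpha')/n_k$ by $\phi_s(\alpha)-\e+o(1)$.

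The main obstacle is the first step. The classical Guerra--Toninelli interpolation gives clean superadditivity for $\E\log Z$ only when $Z>0$ almost surely, which fails here because UNSAT formulas occur with positive probability. Passing to $\log(1+Z)$ repairs the boundary issue but introduces an error term tied to the UNSAT probability of the pieces into which the interpolation decomposes the formula, and it is precisely to absorb these errors that the quantitative hypothesis $P_n(\alpha,0)=O((\log n)^{-1-\e})$ enters. Once $\phi(\alpha)$ is in hand, the Friedgut-based step is comparatively routine, relying only on the boundedness $Z\leq 2^n$ and the fact that monotone functions are continuous outside a countable set.
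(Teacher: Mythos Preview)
Your proposal is correct and follows the same architecture as the paper: the limit $\phi_s(\alpha)$ is produced by the Guerra--Toninelli/Franz--Leone interpolation applied to $\E\log(1+Z)$ (the paper's Lemma~\ref{guerra} and Theorem~\ref{convpsat}, with the $O((\log n)^{-1-\e})$ UNSAT bound absorbing exactly the error you identify), Friedgut's theorem supplies the sharp threshold of $\{Z<2^{n\phi}\}$ for each fixed $\phi$ (Theorem~\ref{fried}), and $\Cble$ is the countable discontinuity set of the monotone limit. The only organizational difference is that the paper interposes Lemma~\ref{conva}, first showing that the critical density $\alpha_n(\phi)$ converges for all but countably many $\phi$ and then locating $\alpha$ relative to that limit, whereas your subsequence-plus-$\alpha'\!\uparrow\!\alpha$ (resp.\ $\alpha'\!\downarrow\!\alpha$) contradiction reaches the same conclusion without isolating that intermediate statement; the two arguments are really the same computation read in opposite directions.
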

In order to prove this result, which is done in Section \ref{proof3}, we first show the following sharp 
threshold result. 
\begin{thm}\label{fried}
For any $\varepsilon>0$ and $\phi \in [0,1)$, there exists $\{\alpha_n(\phi)\}_{n \in \mZ_+}$ such that 
\begin{eqnarray*}
\lim_{n \rightarrow \infty}P_n(\alpha_n(\phi)  - \varepsilon, \phi) &=& 0, \\
\lim_{n \rightarrow \infty}P_n(\alpha_n(\phi)+ \varepsilon, \phi) &=& 1.
\end{eqnarray*}
\end{thm}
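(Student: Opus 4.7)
The plan is to apply Friedgut's sharp threshold theorem to the monotone Boolean function
$$A_{n,\phi}(F) := \1\{Z(F) < 2^{n\phi}\},$$
viewed as a function on the product space indexed by the $N = \binom{n}{k}2^k$ possible $k$-clauses. This function has the two structural features required by Friedgut's framework: (i) $A_{n,\phi}$ is monotone increasing in the set of clauses of $F$, since adding a clause can only shrink $Z(F)$; and (ii) it is invariant under the natural action of the symmetric group $S_n$ on variable labels (as well as under literal negation). Since $P_n(0,\phi)=0$ for $\phi\in[0,1)$ (the empty formula has $Z=2^n$) and $P_n(\alpha,\phi)\to 1$ as $\alpha$ grows, I define the candidate critical density
$$\alpha_n(\phi) \;:=\; \inf\bigl\{\alpha\ge 0 \;:\; P_n(\alpha,\phi)\ge 1/2\bigr\},$$
which is finite for every $n$ and every $\phi\in[0,1)$.

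Next I would invoke the Bourgain--Friedgut dichotomy for symmetric monotone properties: either $A_{n,\phi}$ has a sharp threshold around $\alpha_n(\phi)$ in the sense of Theorem \ref{fried}, or $A_{n,\phi}$ is \emph{approximately local}, meaning there exist a $k$-SAT formula $G$ of bounded size (at most $m=O(1)$ clauses on $v=O(1)$ variables) and a constant $c>0$ such that
$$\pp_{\alpha_n(\phi)}\bigl(A_{n,\phi}\mid F\supseteq G\bigr) \;\ge\; \pp_{\alpha_n(\phi)}(A_{n,\phi}) + c.$$
The crux of the proof is to exclude this local alternative. The intuition is that conditioning on $F\supseteq G$ leaves the remaining clauses distributed essentially as an independent $F''\sim F(n,\alpha)$, and incorporating the $O(1)$ clauses of $G$ into $F''$ can reduce $Z$ by at most a constant multiplicative factor determined by $G$ itself. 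Hence on the exponential scale
$$\tfrac{1}{n}\log_2 Z(F) \;=\; \tfrac{1}{n}\log_2 Z(F'') + O(1/n),$$
so the events $\{Z(F)<2^{n\phi}\}$ and $\{Z(F'')<2^{n\phi}\}$ are asymptotically indistinguishable, yielding $\pp(A_{n,\phi}\mid F\supseteq G)-\pp(A_{n,\phi})=o(1)$ and ruling out locality.

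Combining the two ingredients, Friedgut's theorem delivers exactly the conclusion of Theorem \ref{fried}: for every $\e>0$, $P_n(\alpha_n(\phi)-\e,\phi)\to 0$ and $P_n(\alpha_n(\phi)+\e,\phi)\to 1$.

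The principal obstacle is promoting the heuristic above to the quantitative estimate required by Bourgain's criterion. The crude deterministic bound $Z(F)\ge Z(F'')-m\,2^{n-k}$ is too weak when $Z(F'')$ is of order $2^{n\phi}$ with $\phi<1$, so one must instead establish the multiplicative relation $Z(F)\gtrsim (s_G/2^v)\,Z(F'')$ with high probability, where $s_G$ is the number of assignments of the $v$ variables of $G$ that satisfy $G$. The natural route is to decompose $Z(F'')=\sum_{y\in\{0,1\}^v} Z_y(F'')$, where $Z_y(F'')$ counts solutions of $F''$ whose restriction to the $v$ variables of $G$ equals $y$, and to exploit the permutation invariance of the random model to show that $Z_y(F'')/Z(F'')$ concentrates uniformly in $y$ around $2^{-v}$.
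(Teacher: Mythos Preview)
Your overall strategy---cast $A_{n,\phi}$ as a monotone symmetric property, invoke Friedgut's dichotomy, and rule out the local alternative---is exactly the paper's. The divergence, and the gap, is in how you exclude locality.

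First, the concentration claim $Z_y(F'')/Z(F'')\to 2^{-v}$ is not available: it asserts that the marginal of the uniform measure on solutions, restricted to $v$ designated coordinates, is asymptotically uniform, which is a delicate statement about the solution geometry that you do not prove and which need not hold instance by instance. The paper sidesteps this entirely. Instead of balancing over all $y$, it fixes a \emph{single} assignment $y^*$ satisfying $G$ and uses the deterministic bound $Z(F)\ge Z_{y^*}(F)$. Under the conditioning $G\subseteq F$ and the restriction to $y^*$, the clauses of $G$ are trivially satisfied and drop out; what remains is the solution count of a random formula on $n-r$ variables whose clauses are the restrictions of the non-$G$ clauses of $F$. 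This restricted formula is then compared back to a standard $F(n,\alpha_{\rm s})$ by accounting for the $O(1)$ shortened clauses (Remark~\ref{rem1} replaces short clauses by a slowly growing number of fresh $k$-clauses, at a bounded multiplicative cost in $Z$) and for the slight change in edge probability and system size.

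Second, even granting your multiplicative bound $Z(F)\ge c\,Z(F'')$, the conclusion that $\{Z(F)<2^{n\phi}\}$ and $\{Z(F'')<2^{n\phi}\}$ are ``asymptotically indistinguishable'' does not follow. You only obtain
\[
\pp\{Z(F)<2^{n\phi}\mid G\}\ \le\ P_n\bigl(\alpha_{\rm s},\,\phi+O(1/n)\bigr),
\]
and a priori an $O(1/n)$ shift in $\phi$ can move $P_n$ by $\Theta(1)$. The paper proves precisely the missing continuity estimate (Lemma~\ref{lip1}(ii)): $P_n(\alpha,\phi)-P_n(\alpha,\psi)\le C(\phi-\psi)\sqrt{n}\log n+o(1)$, via a coupling showing that $\Theta(\log n)$ extra random clauses halve $Z$ with high probability. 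Together with the $\alpha$-Lipschitz bound $\partial_\alpha P_n\le C\sqrt{n}$ (Lemma~\ref{lip1}(i)), these quantitative estimates are what turn the heuristic into a proof; without them neither your route nor the paper's restriction route can be completed.
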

In other words, for any fixed $\phi$, the property 
$\{ Z(F(n,\alpha)) < 2^{n\phi} \}$ has a sharp threshold in $\alpha$. 
In \cite{Friedgut}, this result is shown for $\phi=0$,
 i.e., for the property $\{F(n,\alpha) 
\text{ is UNSAT} \}$. As for the $\phi=0$ case, Theorem \ref{fried} is proved by showing that the monotone property $\{ Z(F(n,\alpha)) < 2^{n\phi} \}$ cannot be approximated by a ``local property'', and hence must have a sharp threshold. 
For greater convenience of the reader, and in order to introduce
relevant notations, we reproduce in Section \ref{sec:Friedgut} the Friedgut's Theorem on sharp thresholds for monotone properties
\cite{Friedgut}. The proof of Theorem \ref{fried} is differed to Section \ref{prooffried}.


%
%
In order to prove Theorem \ref{res}, we then would like 
to transfer the threshold in $\alpha$ (Theorem \ref{fried}) into 
a threshold in $\phi$. This step is however not straightforward because of the little knowledge we have about the function $\phi \mapsto \alpha_n(\phi)$. 
In order to establish this threshold transfer, 
we first prove the following result, which shows  the existence 
of the  $n\to\infty$ limit of the normalized logarithm of
the number of solutions when $\alpha < \alpha^*$. 
 
\begin{thm}\label{convpsat}
Let $$ \psi_n (\alpha) :=  \frac{1}{n} \E [ \log Z(F(n,\alpha)) | Z(F(n,\alpha)) \geq 1 ] .$$
We have that $\psi_n (\alpha)$ converges to a limit $\phi_s(\alpha)$, for every $\alpha < \alpha^*$.
\end{thm}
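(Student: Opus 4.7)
Following the outline in Section 1, I would proceed in two stages: first show that
\[
a_n(\alpha) \;:=\; \E\log\bigl(1+Z(F(n,\alpha))\bigr)
\]
satisfies $a_n(\alpha)/n\to\phi_s(\alpha)$, and then transfer this convergence to $\psi_n(\alpha)$ using the hypothesis $\alpha<\alpha^*$. Throughout I would work in the Poisson model (number of clauses $\sim \mathrm{Poisson}(\alpha n)$), which is equivalent to the Binomial model of Definition~1 by the monotonicity remark; the switch is convenient because adding clauses in a Poisson process is memoryless in a way that makes interpolation transparent.

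For the first stage, I would run a Guerra--Toninelli--Franz--Leone interpolation as in \cite{FranzLeone,FranzLeoneToninelli}, adapted to $\log(1+Z)$. Given $n=n_1+n_2$, partition the variables into disjoint blocks $V_1,V_2$ of sizes $n_i$, and for $t\in[0,1]$ let $F_t$ be the union of three independent Poisson batches of clauses: (i)~$\mathrm{Poisson}(t\alpha n)$ $k$-clauses drawn uniformly from $C_k(n)$; (ii)~$\mathrm{Poisson}((1-t)\alpha n_1)$ drawn uniformly from $k$-clauses on $V_1$; (iii)~$\mathrm{Poisson}((1-t)\alpha n_2)$ drawn uniformly from $k$-clauses on $V_2$. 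Then $F_1\stackrel{d}{=}F(n,\alpha)$, while $F_0$ is an independent union of $F(n_1,\alpha)$ and $F(n_2,\alpha)$, so $Z(F_0)=Z_1 Z_2$ with $Z_i$ independent and hence $\E\log(1+Z_1Z_2)\ge \E\log(1+Z_1)+\E\log(1+Z_2)-O(\pp\{Z_1=0\}+\pp\{Z_2=0\})$. I would compute $\tfrac{d}{dt}\E\log(1+Z(F_t))$ by Poisson differentiation, represent the insertion of a single extra clause via the Gibbs measure on satisfying assignments of $F_t$, and use the Franz--Leone moment expansion of $\log$ together with Jensen's inequality to compare the ``full'' insertion to the two ``block'' insertions. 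On the event $\{Z(F_t)\ge 1\}$ this gives the usual nonnegativity of the derivative; on $\{Z(F_t)=0\}$, the identity $\log(1+Z)=\log Z$ fails and contributes a correction of size $O(P_n(\alpha,0))$.

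Integrating in $t$ yields approximate superadditivity
\[
a_{n_1+n_2}(\alpha)\;\ge\; a_{n_1}(\alpha)+a_{n_2}(\alpha)-R(n_1,n_2,\alpha),
\]
with $R(n_1,n_2,\alpha)\le C\bigl(n_1 P_{n_1}(\alpha,0)+n_2 P_{n_2}(\alpha,0)\bigr)$ for an absolute constant $C$. For $\alpha<\alpha^*$, the decay $P_n(\alpha,0)=O((\log n)^{-(1+\e)})$ ensures $\sum_n P_n(\alpha,0)/n<\infty$, and a Hammersley-type quantitative version of Fekete's lemma (which tolerates an error summable in exactly this sense) then gives the existence of $\phi_s(\alpha):=\lim_{n\to\infty}a_n(\alpha)/n\in[0,\log 2]$. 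This is precisely the place where the defining property of $\alpha^*$ (or the weaker variant in Remark~1) is used.

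For the second stage, observe that $\log Z\le\log(1+Z)\le\log Z+\log 2$ on $\{Z\ge 1\}$ while $\log(1+Z)=0$ on $\{Z=0\}$, so by the definition of $\psi_n$,
\[
n\,\psi_n(\alpha)\bigl(1-P_n(\alpha,0)\bigr)\;\le\; a_n(\alpha)\;\le\;\bigl(n\,\psi_n(\alpha)+\log 2\bigr)\bigl(1-P_n(\alpha,0)\bigr).
\]
Since $0\le\psi_n\le\log 2$ and $P_n(\alpha,0)\to 0$, this forces $\psi_n(\alpha)-a_n(\alpha)/n\to 0$, completing the proof. The principal obstacle is the interpolation step: Franz--Leone's derivative identity is cleanest for the finite-temperature free energy $\log\sum_x e^{-\beta H(x)}$, and one must carefully track how the argument degrades when applied at zero temperature to the hard-constraint quantity $\log(1+Z)$, verifying that the only defect is an additive term controlled by $P_n(\alpha,0)$ so that the decay hypothesis defining $\alpha^*$ is exactly strong enough for the approximate Fekete step.
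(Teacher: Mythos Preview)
Your overall strategy matches the paper's, and your second stage (transferring $a_n/n\to\phi_s$ to $\psi_n\to\phi_s$ via the sandwich $n\psi_n(1-P_n)\le a_n\le(n\psi_n+\log 2)(1-P_n)$) is correct. But the first stage has a concrete gap in the error accounting.

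The paper's interpolation (Lemma~\ref{guerra}) is in fact \emph{exact}: $\E\log(1+Z_{n_1+n_2})\ge\E\log(1+Z_{n_1}Z_{n_2})$ with no correction term. The device is to augment the solution space by a single dummy assignment $*$ satisfying every clause, so that $1+Z(F_t)=|S(F_t)\cup\{*\}|$ and the Gibbs average $\langle\cdot\rangle$ over this augmented set is well defined even on $\{Z(F_t)=0\}$; the Franz--Leone moment expansion then gives $\tfrac{d}{dt}\E\log(1+Z(F_t))\ge 0$ everywhere, not just on the SAT event. The error therefore enters only at the $t=0$ endpoint, when one compares $\E\log(1+Z_1Z_2)$ to $a_{n_1}+a_{n_2}$. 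Here your claimed bound $\E\log(1+Z_1Z_2)\ge a_{n_1}+a_{n_2}-O(P_{n_1}+P_{n_2})$ is false: a direct computation gives
\[
a_{n_1}+a_{n_2}-\E\log(1+Z_1Z_2)=\E\log\Bigl(1+\tfrac{Z_1+Z_2}{1+Z_1Z_2}\Bigr)\le \log 3 + a_{n_2}\,P_{n_1}(\alpha,0)+a_{n_1}\,P_{n_2}(\alpha,0),
\]
so $R(n_1,n_2)=O(1)+O(n_2 P_{n_1}+n_1 P_{n_2})$ with \emph{crossed} indices, not the matched $n_1P_{n_1}+n_2P_{n_2}$ you state. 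This distinction is the crux: with crossed indices, holding $n_1$ bounded makes $n_2 P_{n_1}$ grow linearly in $n_2$, and no off-the-shelf Hammersley--Fekete lemma absorbs such an error. The paper handles this by restricting to balanced splits $n_1,n_2\ge n/(\log n)^{\eta}$ (so that $n_i\le n_j(\log n_j)^{2\eta}$ and the crossed terms become matched up to a polylog factor), then absorbing the residual into a shifted sequence $f(n)=g(n)-n(\log n)^{2\eta}\e(n)$ and invoking a tailored Fekete lemma (Lemma~\ref{anal}) that only requires the inequality on this restricted range. Your sketch omits this restriction-and-rewriting step, without which the approximate-superadditivity argument does not close.
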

This theorem is proved in Section \ref{proof2} and the key step in the proof consists in establishing the following pseudo-superadditivity property.
\begin{lemma}\label{guerra}
For any $\alpha$, let $Z_{n}:= Z(F(n,\alpha))$, we then have for any $n_1,n_2 \geq k$
\begin{align*}
 \E  \log (1+Z_{n_1+n_2}) \geq  \E \log (1+Z_{n_1} Z_{n_2})\, .
\end{align*}
\end{lemma}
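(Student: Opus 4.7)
The plan is to prove this pseudo-superadditivity by a Guerra--Toninelli type interpolation between the coupled formula $F(n_1+n_2,\alpha)$ and the decoupled pair $F(n_1,\alpha)\cup F(n_2,\alpha)$. Working in the Poisson model (which the preceding monotonicity remark lets us do at no cost) and writing $n:=n_1+n_2$, for $t\in[0,1]$ I would define a random formula $F(t)$ on $n$ Boolean variables as an independent superposition of $\mathrm{Poi}(t\alpha n)$ clauses drawn uniformly from $C_k(n)$ and, for $i=1,2$, $\mathrm{Poi}((1-t)\alpha n_i)$ clauses drawn uniformly from $C_k(n_i)$ (the $k$-clauses supported only on the $i$-th block of variables). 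At $t=0$ the formula splits over the two blocks, giving $Z(F(0))\stackrel{d}{=}Z_{n_1}Z_{n_2}$; at $t=1$ it has the law of $F(n,\alpha)$, so $Z(F(1))\stackrel{d}{=}Z_{n_1+n_2}$. The lemma then reduces to showing that $t\mapsto\E\log(1+Z(F(t)))$ is non-decreasing, i.e., $\frac{d}{dt}\E\log(1+Z(F(t)))\ge 0$.

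The second step is to compute this derivative using the Poisson identity $\frac{d}{d\lambda}\E\Psi(F)=\E[\Psi(F\cup c)-\Psi(F)]$ for a fresh independent uniform clause $c$ of the corresponding type; the derivative splits into three terms, each of the form $\E\Delta_X$ with $\Delta_X:=\log\bigl((1+Z(F\cup c_X))/(1+Z(F))\bigr)$ for $X\in\{A,B_1,B_2\}$. I would then rewrite $\Delta_X=\log(1-Y_{c_X})$ where $Y_c:=(Z(F)/(1+Z(F)))\cdot\mu_F[\bar c]\in[0,1)$, with $\mu_F$ the uniform measure on satisfying assignments of $F$, $\bar c$ the event that $c$ is violated, and $Y_c:=0$ on $\{Z(F)=0\}$ (under which $\Delta_X=0$). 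Expanding $-\log(1-y)=\sum_{p\ge 1}y^p/p$ and evaluating $\E Y_c^p$ via $p$ independent replicas $\sigma^1,\ldots,\sigma^p\sim\mu_{F(t)}$, the derivative becomes a $p$-series whose $p$-th term is, up to a common non-negative prefactor $\alpha/p$, the expectation of
\begin{align*}
n_1\,\frac{\binom{a_1}{k}}{\binom{n_1}{k}}+n_2\,\frac{\binom{a_2}{k}}{\binom{n_2}{k}}-n\,\frac{\binom{a_1+a_2}{k}}{\binom{n}{k}},
\end{align*}
where $a_i:=|\{j\in\text{block }i:\sigma^1_j=\cdots=\sigma^p_j\}|$ is the within-block replica-agreement count.

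The key step is then to show this combination is non-negative. Asymptotically, using $\binom{a}{k}/\binom{n}{k}\approx(a/n)^k$, the inequality reduces to
\begin{align*}
n_1(a_1/n_1)^k+n_2(a_2/n_2)^k\ge n\bigl((a_1+a_2)/n\bigr)^k,
\end{align*}
which is immediate from Jensen's inequality applied to the convex map $x\mapsto x^k$. The hard part of the argument will be controlling the finite-size corrections: the pointwise inequality with $\binom{a}{k}/\binom{n}{k}$ in place of $(a/n)^k$ can actually reverse for small $n_i$ and certain configurations of $(a_1,a_2)$. I would handle this by running the interpolation in the equivalent model in which the $k$ literal positions of each clause are sampled independently with replacement: then the joint-violation probability of a clause on $n_i$ variables factorizes exactly as $(a_i/n_i)^k\cdot 2^{-k}$, so the convexity bound applies pointwise for every $(a_1,a_2)$. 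The equivalence between this with-replacement model and the uniform-clause model is of the same type as the Poisson/Binomial/deterministic-count equivalences invoked earlier and lets the resulting inequality transfer back. Integrating $\frac{d}{dt}\E\log(1+Z(F(t)))\ge 0$ from $t=0$ to $t=1$ then gives the lemma.
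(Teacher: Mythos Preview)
Your proposal is correct and is essentially the interpolation argument the paper gives (it proves Lemma~\ref{guerra} by referring to the general Lemma~\ref{guerracsp}). The paper avoids your detour through the without-replacement model by working from the outset with clause indices drawn i.i.d.\ with replacement---this is the CSP model of Section~3, for which the required term-by-term positivity is exactly your Jensen step, packaged as convexity of $\Gamma_l$; it also handles the ``$1+$'' by drawing replicas from the augmented space $S(F)\cup\{*\}$ rather than carrying your prefactor $(Z/(1+Z))^p$.
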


The proof of Lemma \ref{guerra} is differed to Section \ref{proof2} and is based on the interpolation technique by 
Guerra and Toninelli \cite{GuerraToninelliLimit}, and Franz-Leone
\cite{FranzLeone}. However, while in those cases 
one obtains superadditivity  of $\E\log Z$, in the present case we
get a weaker result because of the ``1+'' term. 
Hence, in this case the existence of the infinite volume 
limit is not a straightforward consequence of Lemma \ref{guerra}. 
Notice that this problem is intimately related to the fact that
$Z=0$ with positive probability, and therefore $\E\log Z$ 
is not defined. We use the fact that $\alpha < \alpha^*$ to circumvent this problem in Theorem \ref{convpsat}. %

Finally, Theorem \ref{convpsat} implies the following result, which allows the complete the threshold transfer. 
\begin{lemma}\label{conva}
For each $n$ and $\phi$, let 
$\alpha_n (\phi)$ be such that $P_n(\phi,\alpha_n (\phi)) =1/2$. 
We have that $\alpha_n(\phi)$ converges for almost every $\phi \in \phi_s(\A)$ (where $\phi_s$ is defined in Theorem \ref{convpsat}). 
\end{lemma}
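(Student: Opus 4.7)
The plan is to exploit monotonicity of $\phi \mapsto \alpha_n(\phi)$, use Helly's selection theorem to extract a convergent subsequence, and then identify the subsequential limit uniquely (outside a countable set of $\phi$) by pairing the sharp threshold of Theorem~\ref{fried} with the convergence of $\psi_n$ in Theorem~\ref{convpsat}. First, $P_n(\alpha,\phi) = \pp\{Z(F(n,\alpha)) < 2^{n\phi}\}$ is non-decreasing in both arguments: in $\alpha$ by the natural coupling (adding clauses can only reduce $Z$), and in $\phi$ since raising the threshold $2^{n\phi}$ enlarges the event. Setting $\alpha_n(\phi) := \inf\{\alpha : P_n(\alpha,\phi) \ge 1/2\}$, this forces $\phi \mapsto \alpha_n(\phi)$ to be non-increasing, and uniformly bounded in $n$ (by any upper bound on the satisfiability threshold). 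Helly's selection theorem thus guarantees that every subsequence admits a further subsequence along which $\alpha_n(\phi)$ converges at every continuity point of its monotone limit, i.e.\ off a countable set.

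The heart of the proof is to show that all such subsequential limits agree outside a countable exceptional set. Suppose instead that two subsequences $(n_k)$ and $(m_k)$ produced limits with $\alpha^{(1)}_\infty(\phi) < \alpha^{(2)}_\infty(\phi)$, and choose $\alpha$ in the non-empty intersection $(\alpha^{(1)}_\infty(\phi),\alpha^{(2)}_\infty(\phi)) \cap \A$. Along $(n_k)$, for $k$ large we have $\alpha > \alpha_{n_k}(\phi) + \eta$ for some fixed $\eta > 0$; Theorem~\ref{fried} (applied with threshold width $\eta$) together with monotonicity of $P_n(\cdot,\phi)$ yields $P_{n_k}(\alpha,\phi) \to 1$, i.e.\ $\pp\{\log Z(F(n_k,\alpha)) \ge n_k\phi\} \to 0$. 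Because $\alpha < \alpha^*$ gives $\pp\{Z \ge 1\} \to 1$, and because $\log Z/n \in [0,\log 2]$ whenever $Z \ge 1$, we obtain
\[
\psi_{n_k}(\alpha) \;\le\; \phi + (\log 2)\, \pp\{\log Z \ge n_k\phi \mid Z \ge 1\} \;\longrightarrow\; \phi.
\]
A symmetric argument along $(m_k)$, starting from $\alpha < \alpha_{m_k}(\phi) - \eta$, gives $P_{m_k}(\alpha,\phi) \to 0$ and the reverse bound $\psi_{m_k}(\alpha) \ge \phi \cdot \pp\{\log Z \ge m_k\phi \mid Z \ge 1\} \to \phi$. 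Since by Theorem~\ref{convpsat} both $\psi_{n_k}(\alpha)$ and $\psi_{m_k}(\alpha)$ converge to the same limit $\phi_s(\alpha)$, we are forced to conclude that $\phi_s(\alpha) = \phi$ for every $\alpha$ in $(\alpha^{(1)}_\infty(\phi),\alpha^{(2)}_\infty(\phi)) \cap \A$.

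Thus whenever the subsequential limit $\alpha_\infty(\phi)$ fails to be uniquely determined, $\phi_s$ takes the value $\phi$ on an interval of positive length in $\A$. As the non-increasing function $\phi_s$ can admit flat intervals of positive length at only countably many values, this failure occurs on a countable set of $\phi$ at most; the boundary case $\alpha^{(2)}_\infty(\phi) \ge \alpha^*$ can happen only for $\phi = \lim_{\alpha \uparrow \alpha^*} \phi_s(\alpha)$ and is absorbed into the same countable set. Off this exceptional set, every subsequential limit coincides, and hence the full sequence $\alpha_n(\phi)$ converges, which is the conclusion of the lemma. The main technical obstacle is the identification step: it is the trivial a priori bound $\log Z/n \le \log 2$ that lets us convert the probability statement coming from Theorem~\ref{fried} into an inequality for the expectation $\psi_n$, thereby bringing Theorem~\ref{convpsat} to bear.
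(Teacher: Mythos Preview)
Your argument is correct and follows essentially the same route as the paper's proof: both assume non-convergence, use the sharp threshold of Theorem~\ref{fried} along the $\liminf$ and $\limsup$ subsequences to force $\phi_s(\alpha)=\phi$ on a nontrivial interval, and then invoke the fact that a monotone $\phi_s$ has only countably many plateau values. The invocation of Helly's selection theorem is harmless but unnecessary---the paper simply works directly with $\liminf_n \alpha_n(\phi)$ and $\limsup_n \alpha_n(\phi)$---and your explicit expectation bounds make precise a step the paper leaves implicit.
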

Lemma \ref{conva} is proved in Section \ref{proof3}.

%

\section{A general family of random CSP's}

In this section, we extend the results of the previous section to a general 
family of random constraint satisfaction problems (CSP) over binary variables. 
An ensemble in this family is defined as follows.
\begin{definition}
 Let $\mu$ be a distribution over Boolean functions $\varphi:\{-1,+1 \}^k\to\{0,1\}$, which we call the 
{\it clause type distribution}. Let $n$ be an integer and $\alpha \in \mR_+$. 
A random formula from the ensemble $F_k(\alpha, n, \mu)$
is drawn as follows. For each $a\in\{1,\dots,m=\lfloor \alpha n \rfloor\}$
the $a$-th clause is drawn independently from previous
ones. For  clause $a$, $k$ indices $i_1(a),\ldots,i_k(a)$ are drawn 
independently and uniformly at random in $[n]$. Further 
$\vp_a: \{-1,+1 \}^k \rightarrow \{0,1\}$ is drawn under the distribution 
$\mu$, producing the clause $\vp_a(x_{i_1(a)}, \ldots,x_{i_k(a)})$. 

An assignment $x\in\{+1,-1\}^n$ is said to 
\emph{satisfy} the formula $F_k(\alpha, n, \mu)$
if, for each $a\in [m]$,   we have $\vp_a(x_{i_1(a)}, \ldots,x_{i_k(a)})=1$. 
\end{definition}

As in previous section, we will often drop the subscripts $k$ in the following.  
Further, $Z (F)$ is the number of satisfying assignments of formula $F$
and we define
\begin{eqnarray*}
P_n(\alpha, \phi, \mu) := \pp\{ Z(F(\alpha, n, \mu)) < 2^{n\phi} \}\, .
\end{eqnarray*}

\begin{definition}
Note that $P_n(\alpha,0, \mu)=
 \pp\{Z(F(\alpha,n, \mu)) =0 \}$ is the probability that $F(\alpha,n, \mu)$
is UNSAT. We define
\begin{eqnarray*}
\alpha^*(\mu) := \sup \big\{ \alpha :\;
 P_n(\alpha,0, \mu) = O(1/(\log n)^{1+\e})
\mbox{ for some }  \e>0 \big\}\, .
\end{eqnarray*}
\end{definition}

\begin{definition}
For $\vp: \{-1,1\}^k \rightarrow \{0,1\}$ and $\theta \in [-1,1]$, let 
$$\| \vp \|_\theta^2 = \sum_{x \in \{-1,1\}^k} \vp(x)^2 v_\theta (x)\quad 
\mbox{ and }\quad \| \vp \| = \| \vp \|_0$$
where $$v_\theta (x) = \prod_{i=1}^k \frac{1+x_i\theta}{2}.$$
\end{definition}
Note that $\| \vp \|_\theta^2$ is the probability that $\vp=1$ under the measure $v_\theta$, which assigns probabilities $(1-\theta)/2$ and $(1+\theta)/2$ to $-1$ and $+1$ respectively.

Our CSP ensemble is specified by the distribution $\mu$ over clause types and 
we now describe two set of hypotheses on this distribution.

\vspace{0.2cm}

\noindent
{\bf H1.} (a) \emph{Dominance of balanced assignments.} For every 
$\theta \in [-1,1]$,
$\E_\vp \log \| \vp \|_\theta \leq \E_\vp \log \| \vp \|,$
with equality only if $\theta=0$. This condition implies that, in a typical random instance, most solutions have almost as many $1$'s as $-1$'s.\\
(b) \emph{Unsatisfiability of uniform assignments.} For every $s\in\{-1,+1\}$, 
there is at least one clause $\vp$ with $\mu(\vp)>0$ such that $\vp 
(s, \ldots, s)=0$.

\vspace{0.2cm}

\noindent{\bf H2.} \emph{Convexity of $\Gamma_l$.} Let $M_1(\{-1,1\}^l)$ be the set of probability measure on $\{-1,1\}^l$ and let
\begin{align*}
\Gamma_l: \, M_1(\{-1,1\}^l) &\to \mR\, 
\end{align*}
be defined by
\begin{align}
 \Gamma_l(\nu):=
\E_\vp \E_{Z^{(r)}} \prod_{r=1}^l (1-\vp(Z^{(r)})) \label{cond1}
\end{align}
where $Z^{(r)}$ are Boolean random vectors of dimension $k$ such that $Z_i=(Z_i^{(1)},\ldots, Z_i^{(l)})$, $i=1,\ldots,k$, are i.i.d. with distribution $\nu$, whereas $\vp$ is a random clause type drawn under distribution $\mu$. 
Under H2, $\mu$ is assumed to make $\Gamma_l$ convex for for any $l \geq 1$.

\vspace{0.2cm}

Notice that conditions H1.(a) and H1.(b) coincide with  conditions 4 and 5 
in \cite{MRT09}.
Further, hypothesis H1 is satisfied by a number of interesting
random CSP ensembles. Such examples include
\begin{itemize}
\item $k$-NAE-SAT, where $\vp(x)=\vp_s(x)= \1(x \notin \{-s,s\})$ and $\mu(\vp_s)=2^{-k}$ for each $s \in \{-1,1\}^k$;
\item Hypergraph $2$-coloring, where
$\vp(x)= \1(x \notin \{\underline{-1};
\underline{+1}\})$ is the unique clause in the support of $\mu$, with $\underline{-1} = (-1,\dots,-1)$
and  $\underline{+1} = (+1,\dots,+1)$;
\item $k$-XOR-SAT, where $\vp(x) = \vp_s(x) = \1(\prod_{i=1}^k x_i =s)$ and 
$\mu(\vp_s)=1/2$ for each $s\in \{-1,1\}$;
\item $k$-SAT, where $\vp(x)=\vp_s(x)= \1(x \neq s)$ and $\mu(\vp_s)=2^{-k}$ for each $s \in \{-1,1\}^k$.
\end{itemize}
For the first three examples above, it is checked in 
\cite{MRT09} that hypothesis H1 is satisfied.
 Let us check that this is the case for $k$-SAT as well. Note that 
\begin{align*}
\E_s \| \vp \|_\theta^2= 1-\E_s  \prod_{i=1}^k \frac{1-s_i \theta}{2} =1-2^{-k} = \E_s \| \vp \|^2, 
\end{align*}
hence
\begin{align*}
\E_s \log \| \vp \|_\theta^2 \leq \log \E_s \| \vp \|_\theta^2 =\log \E_s \| \vp \|^2 =  \E_s  \log\| \vp \|^2.  
\end{align*}
This verifies condition H1.(a). Condition H1.(b) holds trivially.

Hypothesis H2 is not straightforward to check.
The next definition characterizes a family of clause type distributions 
satisfying it.
\begin{definition}[$k$-factorizing distributions]
A clause type distribution $\mu$ is said to {\it k-factorize} if it has the following structure. There exists an integer $J \geq 1$, 
such that any $\vp \in \supp (\mu)$ is of the form
\begin{align}
\vp (x) = \1 (x \notin \{s^{(1)}, \ldots, s^{(J)}\}), 
\end{align}
for some $s^{(1)}, \ldots, s^{(J)} \in \{-1,1\}^k$,
and 
\begin{align}
\mu ( \vp) = \prod_{i=1}^k \bar{\mu} (s_i^{(1)}, \ldots, s_i^{(J)}) \label{cond2}
\end{align}
where $\bar{\mu}$ is a probability distribution on $\{-1,1\}^J$.
In other words, the vectors $(s_i^{(1)}, \ldots, s_i^{(J)})$, for $i=1,\ldots,k$, can have correlated components but are mutually i.i.d. with distribution $\bar{\mu}$.
\end{definition}
This definition can be generalized by letting  $J$ 
itself to be random, but we stick to the above case for the sake of simplicity.

The class of $k$-factorizing clause type distributions includes, among other problems: 
\begin{itemize}
\item $k$-NAE-SAT: $\vp(x) = \1 (x \notin \{-s,s \})$ 
for $s\in\{-1,+1\}^k$ uniformly random. This is $k$-factorizing 
with $\bar{\mu}(-1,1)  = \bar{\mu}(1,-1)=1/2$;
\item Hypergraph $2$-coloring: $\vp(x) = \1 (x \notin \{\underline{-1},\underline{+1}\})$ with $\bar{\mu}(-1,1)  = 1$; 
\item $k$-SAT: $\vp(x) = \1 (x \notin \{s\})$ with $\bar{\mu} (1) = \bar{\mu}(-1) = 1/2$.
\end{itemize}
Condition H2 is satisfied by $k$-factorizing distributions as stated formally
below.
\begin{lemma}\label{ch1}
The mapping $\Gamma_l$ is convex for any $l \geq 1$ if the clause type distribution k-factorizes.  
\end{lemma}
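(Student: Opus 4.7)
The plan is to use the $k$-factorization structure to rewrite $\Gamma_l(\nu)$ explicitly as a sum of $k$-th powers of affine functions of $\nu$, from which convexity is immediate.

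First I would fix a clause type $\vp$ in the support of $\mu$; by the $k$-factorizing form, $\vp(x)=\1(x\notin\{s^{(1)},\ldots,s^{(J)}\})$ where, without loss of generality (absorbing any duplicates into a relabeled $\bar\mu$ on a smaller alphabet), the $s^{(j)}$'s are distinct. Then $1-\vp(Z^{(r)})=\sum_{j_r=1}^J\1(Z^{(r)}=s^{(j_r)})$, and expanding the product over $r$ and using $\1(Z^{(r)}=s^{(j_r)})=\prod_{i=1}^k\1(Z_i^{(r)}=s_i^{(j_r)})$ yields
\begin{align*}
\prod_{r=1}^l(1-\vp(Z^{(r)}))
=\sum_{(j_1,\ldots,j_l)\in[J]^l}\prod_{i=1}^k \1\bigl(Z_i=(s_i^{(j_1)},\ldots,s_i^{(j_l)})\bigr).
\end{align*}
Taking expectation over $Z_1,\ldots,Z_k\iid\nu$ gives
\begin{align*}
\E_Z\prod_{r=1}^l(1-\vp(Z^{(r)}))=\sum_{(j_1,\ldots,j_l)\in[J]^l}\prod_{i=1}^k\nu\bigl((s_i^{(j_1)},\ldots,s_i^{(j_l)})\bigr),
\end{align*}
and averaging over $\vp$ under the $k$-factorization hypothesis (\ref{cond2}) — which makes the rows $(s_i^{(1)},\ldots,s_i^{(J)})$, $i=1,\ldots,k$, i.i.d.\ under $\bar\mu$ — the $k$ factors decouple and become identical, producing
\begin{align*}
\Gamma_l(\nu)=\sum_{(j_1,\ldots,j_l)\in[J]^l}L_{j_1,\ldots,j_l}(\nu)^k, \qquad L_{j_1,\ldots,j_l}(\nu):=\E_{\sigma\sim\bar\mu}\bigl[\nu\bigl((\sigma^{(j_1)},\ldots,\sigma^{(j_l)})\bigr)\bigr].
\end{align*}

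Each $L_{j_1,\ldots,j_l}$ is an affine function of $\nu$ (a fixed convex combination of coordinate evaluations) taking values in $[0,1]$, and $t\mapsto t^k$ is convex on $[0,\infty)$ for every integer $k\ge 1$. Hence each $L_{j_1,\ldots,j_l}^k$ is a convex function on $M_1(\{-1,1\}^l)$, and $\Gamma_l$, being a finite sum of such functions, is convex, which is the desired conclusion.

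The substantive step in the whole argument is the collapse of the product $\prod_{i=1}^k\nu(\cdot)$ — which is only multilinear in $\nu$ and can easily fail to be convex for a single $\vp$ — into a $k$-th power of a single linear functional; this collapse is precisely what condition (\ref{cond2}) guarantees, since the i.i.d.\ factorization over the $k$ coordinate slots under $\bar\mu$ turns the expectation of a product into the product of identical expectations. Without the $k$-factorizing hypothesis, the contribution of a fixed $\vp$ is genuinely multilinear and the sum over $\vp$ need not reassemble into a convex expression; this is where I expect the main conceptual obstacle would arise for any attempt to weaken H2 beyond $k$-factorizing distributions.
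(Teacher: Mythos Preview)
Your proof is correct and follows essentially the same route as the paper's: both expand $\prod_r(1-\vp(Z^{(r)}))$ as a sum over $(j_1,\ldots,j_l)\in[J]^l$, take the expectation over the i.i.d.\ $Z_i$'s, and then use the $k$-factorization of $\mu$ to collapse the $k$-fold product into a $k$-th power of the single affine functional $\nu\mapsto\E_{\sigma\sim\bar\mu}\nu(\sigma^{(j_1)},\ldots,\sigma^{(j_l)})$, yielding $\Gamma_l(\nu)=\sum_{(j_1,\ldots,j_l)}[\E_{\sigma}\nu(\sigma^{(j_1)},\ldots,\sigma^{(j_l)})]^k$. Your version is more explicit (and your remark on distinctness of the $s^{(j)}$'s patches a small gap the paper glosses over), but the argument is the same.
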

Note that $k$-XOR-SAT does not belong to this class of distributions, nevertheless, condition H2 holds in this case as well, as stated below.
\begin{lemma}\label{ch2}
The mapping $\Gamma_l$ is convex for any $l \geq 1$ for k-XOR-SAT with k even.  
\end{lemma}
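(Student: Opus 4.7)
The plan is to compute $\Gamma_l(\nu)$ in closed form for $k$-XOR-SAT and observe that, term by term, the result is a convex function of $\nu$ when $k$ is even. Recall that here $\vp_s(x) = \1(\prod_{i=1}^k x_i = s)$ with $s$ uniform on $\{-1,+1\}$ under $\mu$, so that
\[
1 - \vp_s(x) = \frac{1 - s \prod_{i=1}^k x_i}{2}.
\]

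First I would expand the product over clauses by distributing:
\[
\prod_{r=1}^l \bigl(1 - \vp_s(Z^{(r)})\bigr) = 2^{-l} \sum_{S \subseteq [l]} (-s)^{|S|} \prod_{r \in S} \prod_{i=1}^k Z_i^{(r)}.
\]
Averaging over $s$ uniform on $\{-1,+1\}$ kills all odd-cardinality terms, so
\[
\E_\vp \prod_{r=1}^l \bigl(1 - \vp(Z^{(r)})\bigr) = 2^{-l} \sum_{\substack{S \subseteq [l] \\ |S| \text{ even}}} \prod_{i=1}^k \prod_{r \in S} Z_i^{(r)}.
\]
Since the coordinates $Z_1, \ldots, Z_k$ are i.i.d.\ under $\nu$, setting $m_S(\nu) := \E_{W \sim \nu} \prod_{r \in S} W^{(r)}$ and taking expectation yields the closed form
\[
\Gamma_l(\nu) = 2^{-l} \sum_{\substack{S \subseteq [l] \\ |S| \text{ even}}} m_S(\nu)^k.
\]

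Each $m_S$ is a linear functional of $\nu$ on $M_1(\{-1,+1\}^l)$ with values in $[-1,1]$. Because $k$ is even, the map $t \mapsto t^k$ is convex on all of $\mR$, hence $\nu \mapsto m_S(\nu)^k$ is convex as the composition of a convex function with a linear map; a finite sum of convex functions is convex, which gives the claim.

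The only place the parity is actually used is the last step: for $k$ odd, $t \mapsto t^k$ is concave on $(-\infty, 0]$, and since $m_S(\nu)$ genuinely takes negative values (e.g.\ $\nu$ a point mass on a vector with an odd number of $-1$'s in the coordinates indexed by $S$), the individual terms need not be convex. This is the reason the hypothesis ``$k$ even'' appears in the statement, and I do not see a way to avoid it by a rearrangement of terms, so it should not be an obstacle but rather a genuine restriction of this direct Fourier-expansion approach.
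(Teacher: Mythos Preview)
Your proof is correct and follows essentially the same route as the paper: both expand $\Gamma_l(\nu)$ via the Fourier characters of $\{-1,+1\}^l$, observe that averaging over the random sign $s$ leaves only even-cardinality subsets, and conclude from the convexity of $t\mapsto t^k$ for $k$ even applied to the linear moments $m_S(\nu)=\E_{W\sim\nu}\prod_{r\in S}W^{(r)}$ (which the paper denotes $f(Q)$). Your direct algebraic expansion of $(1-s\prod_i x_i)/2$ is in fact a bit cleaner than the paper's detour through $\pp\{\prod_i Z_i=-s^l\}$ and Fourier inversion, and you correctly track the $2^{-l}$ normalization.
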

The proofs of Lemma \ref{ch1} and Lemma \ref{ch2} are differed to Section \ref{proof4}.
We now state the equivalent of Theorem \ref{res} for this general class of 
CSPs.
\begin{thm}\label{res2}
Assume $\mu$ to satisfy conditions H1 and H2. Then there exists a countable 
set $\Cble$ and a function $\alpha\mapsto \phi_s(\alpha)$ 
such that, for any  $\alpha \in [0,\alpha^* (\mu))\setminus\Cble$, 
and  any $\e>0$, 
\begin{eqnarray*} 
\lim_{n \rightarrow \infty} P_n(\alpha,\phi_s(\alpha)- \varepsilon, \mu) &=& 0, \\
\lim_{n \rightarrow \infty} P_n(\alpha,\phi_s(\alpha) + \varepsilon, \mu) &=& 1.
\end{eqnarray*}
\end{thm}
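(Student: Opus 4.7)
\medskip

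\noindent\textbf{Proof plan for Theorem \ref{res2}.}
The plan is to mimic the three-step structure used to prove Theorem \ref{res}, adapting each ingredient to the general CSP ensemble $F(\alpha,n,\mu)$ under hypotheses H1 and H2.

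\emph{Step 1: Sharp threshold in $\alpha$ for each fixed $\phi$.} I would first establish the analog of Theorem \ref{fried}: for every $\phi\in[0,1)$ and $\e>0$ there is $\alpha_n(\phi)$ such that $P_n(\alpha_n(\phi)-\e,\phi,\mu)\to 0$ and $P_n(\alpha_n(\phi)+\e,\phi,\mu)\to 1$. The property $\{Z(F)<2^{n\phi}\}$ is monotone in clause additions (adding a clause can only kill solutions), so Friedgut's criterion applies exactly as in Section \ref{prooffried}. The only nontrivial point is ruling out approximation by a local property: here one invokes H1.(b), which guarantees that the all-$+1$ and all-$-1$ assignments are falsified by at least one clause type of positive $\mu$-mass, so no bounded configuration of clauses can force $Z$ to jump above $2^{n\phi}$ on a macroscopic fraction of formulas. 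This replaces the role played in \cite{Friedgut} by the unsatisfiability of constant assignments in $k$-SAT.

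\emph{Step 2: Existence of the free energy $\phi_s(\alpha)$.} I would then prove the analog of Theorem \ref{convpsat}, that $\psi_n(\alpha):=\frac{1}{n}\E[\log Z(F(\alpha,n,\mu))\mid Z\ge 1]$ converges for every $\alpha<\alpha^*(\mu)$. The heart of the matter is the pseudo-superadditivity bound
\begin{equation*}
\E\log(1+Z_{n_1+n_2})\;\ge\;\E\log(1+Z_{n_1}Z_{n_2}),
\end{equation*}
which I would establish by the Guerra--Toninelli / Franz--Leone interpolation scheme as in Lemma \ref{guerra}. Specifically, one defines an interpolating family $F_t$, $t\in[0,1]$, in which each clause independently touches the full $n_1+n_2$ variables with probability $t$ and otherwise falls entirely in one of the two blocks, with clause intensities chosen so that the endpoints $t=0,1$ give the two sides of the inequality. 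Writing $\log(1+Z_t)$ via its identity with the $\Gamma_l$ functionals evaluated at the empirical overlap distribution of replicas, the derivative in $t$ becomes a sum of terms of the form $\Gamma_l(\nu)-\Gamma_l(\text{mixture})$. Here the two hypotheses enter crucially: H2 (convexity of $\Gamma_l$, guaranteed for $k$-factorizing distributions by Lemma \ref{ch1} and for even-$k$ XOR-SAT by Lemma \ref{ch2}) makes each such term have the right sign, and H1.(a) ensures the extremum over the magnetization parameter $\theta$ is attained at $\theta=0$, so that the balanced regime controls the bound. The conditional normalization $\psi_n$ rather than $\frac{1}{n}\E\log Z$ is forced by the fact that $Z=0$ with positive probability; the hypothesis $\alpha<\alpha^*(\mu)$, i.e.\ $\pp\{Z=0\}=O((\log n)^{-1-\e})$, is exactly what is needed to deduce convergence of $\psi_n$ from the pseudo-superadditivity of $\E\log(1+Z)$, by repeating verbatim the argument of Theorem \ref{convpsat}: the $O((\log n)^{-1-\e})$ decay makes the discrepancy between $\E\log(1+Z)$ and $\E[\log Z\mid Z\ge 1]$ negligible on the normalized scale $1/n$, after summation à la Fekete.

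\emph{Step 3: Threshold transfer.} With $\phi_s(\alpha):=\lim_n \psi_n(\alpha)$ in hand, I would replicate Lemma \ref{conva}: define $\alpha_n(\phi)$ as the (any) clause density at which $P_n(\cdot,\phi,\mu)=1/2$; by Step 1 this is the sharp-threshold location. Combined with $\E[\frac{1}{n}\log Z\mid Z\ge 1]\to\phi_s(\alpha)$ and $\pp\{Z=0\}\to 0$ from $\alpha<\alpha^*(\mu)$, standard monotonicity in $\alpha$ and $\phi$ yields that $\phi\mapsto \alpha_n(\phi)$ converges at every continuity point of its limit, which is a monotone function and hence has at most countably many discontinuities; this produces the exceptional countable set $\Cble\subseteq[0,\alpha^*(\mu))$. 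Inverting gives the two statements of the theorem on $[0,\alpha^*(\mu))\setminus \Cble$.

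\emph{Main obstacle.} The delicate step is Step 2: verifying that the interpolation derivative is signed in the general CSP setting. The computation leaves behind a sum over $l\ge 1$ of differences $\Gamma_l(\text{avg})-\text{avg }\Gamma_l$ together with a contribution depending on the magnetization $\theta$ of each replica, and one must combine H2 (to fix the sign of the $\Gamma_l$ differences for every $l$, not merely $l=2$ as in a second-moment argument) with H1.(a) (to reduce to $\theta=0$) to conclude. The extra ``$1+$'' in $\log(1+Z)$, which is essential since $Z$ can vanish, is also what prevents a direct superadditivity statement and forces the conditional formulation of $\psi_n$ together with the logarithmic control on $\pp\{Z=0\}$ granted by $\alpha<\alpha^*(\mu)$.
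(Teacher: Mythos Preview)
Your three-step plan matches the paper's approach exactly: sharp threshold in $\alpha$ for fixed $\phi$, convergence of $\psi_n(\alpha)$ via interpolation, then threshold transfer via the analog of Lemma~\ref{conva}. The paper in fact omits the details of Theorem~\ref{res2} and refers back to the $k$-SAT proofs, so your outline is precisely what is intended.

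There is, however, a misattribution of the hypotheses that you should correct. In Step~2 you claim that H1.(a) enters the interpolation argument to ``reduce to $\theta=0$'' and that the derivative computation produces ``a contribution depending on the magnetization $\theta$ of each replica.'' This is not the case. The proof of Lemma~\ref{guerracsp} uses \emph{only} H2: one expands $\log\langle\vp(X_I)\rangle$ as a power series in $\langle 1-\vp\rangle$, introduces replicas, and the $t$-derivative reduces termwise to $\Gamma_l(\bar P)$ versus $\frac{n_1}{n}\Gamma_l(\bar P_1)+\frac{n_2}{n}\Gamma_l(\bar P_2)$ with $\bar P=\frac{n_1}{n}\bar P_1+\frac{n_2}{n}\bar P_2$, which is controlled by convexity of $\Gamma_l$ alone. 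No magnetization parameter $\theta$ appears, and no optimization over $\theta$ is performed. Conversely, H1.(a) (dominance of balanced assignments) is used in Step~1, not Step~2: the sharp-threshold result for general CSPs, taken from \cite{MRT09}, requires both H1.(a) and H1.(b), not just H1.(b) as you state. So the correct partition is: H1 $\Rightarrow$ Step~1, H2 $\Rightarrow$ Step~2. Since both hypotheses are assumed in the theorem this does not break your argument, but the mechanics you describe for the interpolation are not quite right.
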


As in previous section, the proof of this theorem is based on the following 
two theorems. 
\begin{thm}
For any $\mu$ satisfying H1 and $\phi \in [0,1)$, 
there exists $\{\alpha_n(\phi)\}_{n \in \mZ_+}$ such that for any $\e>0$,
\begin{eqnarray*}
\lim_{n \rightarrow \infty}P_n(\alpha_n(\phi)  - \varepsilon, \phi, \mu) &=& 0, \\
\lim_{n \rightarrow \infty}P_n(\alpha_n(\phi)+ \varepsilon, \phi, \mu) &=& 1.
\end{eqnarray*}
\end{thm}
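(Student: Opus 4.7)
The plan is to follow the argument for Theorem \ref{fried} from Section \ref{prooffried} essentially verbatim, replacing the $k$-SAT specific facts by hypothesis H1 at each point where they are used. Hypothesis H2 plays no role here; it enters only downstream, in Lemma \ref{guerra}.

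First I would set up monotonicity. Since adding a clause can only remove solutions, $A_{n,\phi}:=\{Z(F(\alpha,n,\mu))<2^{n\phi}\}$ is a monotone increasing property of the clause set. Passing to the independent-inclusion variant of $F(\alpha,n,\mu)$, which is equivalent to the fixed-$m$ model up to an $o(1)$ shift in $\alpha$, the map $\alpha\mapsto P_n(\alpha,\phi,\mu)$ is continuous and nondecreasing, so I may define $\alpha_n(\phi)$ by $P_n(\alpha_n(\phi),\phi,\mu)=1/2$. Friedgut's theorem, as reproduced in Section \ref{sec:Friedgut}, then yields the usual dichotomy: either the threshold of $A_{n,\phi}$ around $\alpha_n(\phi)$ is sharp with window $o(1)$---which is the conclusion we want---or, along a subsequence, $A_{n,\phi}$ is approximated by a local property, i.e.\ there exist a fixed constant-size ``booster'' formula $B$ on $q=O(1)$ variables and a constant $c>0$ such that the conditional insertion of $B$ into a random formula at density $\alpha_n(\phi)$ shifts $\pp\{A_{n,\phi}\}$ by at least $c$.

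The main obstacle is excluding this booster alternative, and this is where hypothesis H1 is used. Writing
\begin{equation*}
Z(F'\cup B)=\sum_{\tau\in\{-1,+1\}^q:\;B(\tau)=1}Z_\tau(F'),
\end{equation*}
where $F'$ is a random formula at density $\alpha_n(\phi)-o(1)$ and $Z_\tau(F')$ counts satisfying assignments of $F'$ whose restriction to the $q$ booster variables equals $\tau$, the goal is to show that with high probability $Z_\tau(F')=2^{-q}Z(F')(1+o_n(1))$ for each admissible $\tau$. Condition H1.(a) asserts $\E_\vp\log\|\vp\|_\theta<\E_\vp\log\|\vp\|$ strictly for $\theta\neq 0$, which via a standard first-moment computation forces the Gibbs measure on the solution set to concentrate on assignments of bias $\theta\approx 0$; its marginal on any fixed constant-size subset of coordinates is therefore close to uniform on $\{-1,+1\}^q$. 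Condition H1.(b) excludes the degenerate case where the two all-constant assignments could skew this marginal. Consequently $\log Z(F'\cup B)=\log Z(F')+O(1)$, so inserting $B$ shifts $(1/n)\log Z$ by only $o(1)$ and cannot produce a constant jump in $\pp\{(1/n)\log Z<\phi\}$---contradicting the booster property.

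Having ruled out the local-property alternative, the coarse-threshold case of Friedgut's theorem is excluded and the sharp threshold at $\alpha_n(\phi)$ follows. The most delicate step is the marginal-uniformity claim in the previous paragraph: this is precisely the balance input that conditions H1.(a) and H1.(b) were tailored to provide, and it parallels the analogous step in the proof of Theorem \ref{fried} for $k$-SAT.
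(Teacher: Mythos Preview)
The paper does not supply its own proof here: it cites \cite{MRT09} and remarks that only H1.(a)--(b) are actually used there, so there is no in-paper argument to compare against. That said, your outline does \emph{not} follow Section~\ref{prooffried} ``essentially verbatim'' as you claim. The $k$-SAT proof of Lemma~\ref{crit} never invokes uniformity of Gibbs marginals: it fixes one assignment $\tau$ satisfying $G$, uses $Z(F\cup G)\ge Z_\tau(F)=Z(F|_\tau)$, and then compares the \emph{restricted} formula $F|_\tau$ on $n-r$ variables to $F(n-r,\alpha_{\rm s}+o(1))$ by dominating shortened clauses with $1$-clauses, trading those for slowly growing packs of random $k$-clauses via Remark~\ref{rem1}, and absorbing the resulting drift in $(\alpha,\phi)$ with Lemma~\ref{lip1}. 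The marginal-uniformity mechanism you describe has no counterpart in that argument.

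More seriously, the implication you rely on,
\[
\text{``solutions concentrate on bias }\theta\approx 0\text{''}\ \Longrightarrow\ \text{``the marginal on any fixed $q$-subset is close to uniform,''}
\]
is a genuine gap. Hypothesis H1.(a) is a first-moment statement about the \emph{global} bias: via Markov it can at best force most solutions of a typical $F$ to be near-balanced, but it gives no control on correlations among specified coordinates. A formula can have every solution exactly balanced and still force, say, $x_1=x_2$ throughout; then the marginal on $\{x_1,x_2\}$ is supported on two points and a booster $B$ forbidding $x_1=x_2$ annihilates all solutions. Exchangeability of the random model only tells you that the \emph{expected} marginal (averaging over $F$) is uniform; to obtain $\log Z(F'\cup B)=\log Z(F')+O(1)$ you would need the pointwise statement $Z_\tau(F')\asymp 2^{-q}Z(F')$ for typical $F'$, which is a decorrelation claim that a ``standard first-moment computation'' does not deliver. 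Without it the booster contradiction does not go through, and the coarse-threshold alternative is not excluded.
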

This theorem is proved in \cite{MRT09}. Although in that paper 
a larger set of conditions was required in the formal statement,
by simple inspection of the proof it follows that only conditions 
H1.(a) and H1.(b) were in fact used.

\begin{thm}\label{convcsp}
Let $$ \psi_n (\alpha) :=  \frac{1}{n} \E [ \log Z(F(\alpha,n, \mu)) | Z(F(\alpha,n, \mu)) \geq 1 ] .$$
For any $\mu$ satisfying H2 and for any $\alpha < \alpha^*(\mu)$,
$\psi_n (\alpha)$ converges to a limit $\phi_s(\alpha)$.
\end{thm}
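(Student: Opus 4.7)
My proof would parallel that of Theorem~\ref{convpsat}, with hypothesis H2 replacing the structural convexity facts that held automatically for $k$-SAT.

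\emph{Step 1: a pseudo-superadditivity inequality.} First I would prove the CSP analogue of Lemma~\ref{guerra}: with $Z_n=Z(F(n,\alpha,\mu))$ and $Z_{n_1},Z_{n_2}$ independent,
$$\E\log(1+Z_{n_1+n_2})\ \geq\ \E\log(1+Z_{n_1}Z_{n_2}).$$
The argument is a Guerra--Toninelli interpolation on $n_1+n_2$ variables: introduce a Poisson family of clauses (of total rate $\alpha(n_1+n_2)$) in which each clause independently chooses its $k$ variables either uniformly from the full vertex set (with probability $t$) or uniformly within one of the two blocks (with probability $(1-t)n_i/(n_1+n_2)$ for block $i$). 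Then $t=1$ recovers $F(n_1+n_2,\alpha,\mu)$ and $t=0$ yields the disjoint pair $F(n_1,\alpha,\mu)\sqcup F(n_2,\alpha,\mu)$. Differentiating $\E\log(1+Z(t))$ via the Poisson add-a-clause identity, and expanding $\log(1+Z)$ through a replica-type representation, the derivative reduces to a sum of terms of the form $\Gamma_l(\lambda\nu_1+(1-\lambda)\nu_2)-\lambda\Gamma_l(\nu_1)-(1-\lambda)\Gamma_l(\nu_2)$ with $\lambda=n_1/(n_1+n_2)$ and $\nu_i$ an empirical single-variable replica measure on block $i$. Convexity of $\Gamma_l$ under H2 ensures each such term has the correct sign, so $t\mapsto\E\log(1+Z(t))$ is monotone in $t$ and the inequality follows.

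\emph{Step 2: from the inequality to convergence.} Write $a_n:=\E\log(1+Z_n)$. On $\{Z_n\geq 1\}$ one has $\log Z_n\leq\log(1+Z_n)\leq\log 2+\log Z_n$, and $\alpha<\alpha^*(\mu)$ gives $\pp\{Z_n=0\}=O((\log n)^{-(1+\e)})$, so $a_n/n$ and $\psi_n(\alpha)$ have the same limiting behaviour, and it suffices to show $a_n/n$ converges. Using Step~1 together with independence,
$$a_{n_1+n_2}\ \geq\ \E\log(1+Z_{n_1}Z_{n_2})\ \geq\ n_1\psi_{n_1}\,\pp\{Z_{n_2}\geq 1\}+n_2\psi_{n_2}\,\pp\{Z_{n_1}\geq 1\},$$
which rearranges to $a_{n_1+n_2}\geq a_{n_1}+a_{n_2}-\eta(n_1,n_2)$ with $\eta(n_1,n_2)=O\bigl((n_1+n_2)/(\log\min(n_1,n_2))^{1+\e}\bigr)$. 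Since $\sum_n 1/(n(\log n)^{1+\e})<\infty$, a de~Bruijn--Erd\H{o}s-style argument for approximately superadditive sequences gives convergence of $a_n/n$, and hence of $\psi_n$, to some limit $\phi_s(\alpha)$.

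\emph{Main obstacle.} The technical heart is Step~1, specifically obtaining a clean expression for $\frac{d}{dt}\E\log(1+Z(t))$ in terms of the $\Gamma_l$ functionals. The standard replica trick $\log Z=\lim_{l\to 0}(Z^l-1)/l$ does not directly accommodate the ``$1+$'' (needed because $Z$ may vanish), and this subtlety is precisely what forces the weaker pseudo-superadditivity here, as opposed to true superadditivity in \cite{FranzLeone}. I would handle it either by running the interpolation for $\log(\beta+Z)$ with $\beta>1$ and sending $\beta\downarrow 1$ while preserving monotonicity in $t$, or by using $\log(1+Z)=\int_0^1 Z/(1+sZ)\,ds$ and, for each $s$, interpreting $Z/(1+sZ)$ via a tilted partition function to which the standard Franz--Leone computation applies, the convexity of $\Gamma_l$ being exactly what ensures the $s$-integrand has the right sign.
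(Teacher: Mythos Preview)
Your two-step plan is exactly the paper's: Step~1 is Lemma~\ref{guerracsp}, and Step~2 is the proof of Theorem~\ref{convpsat} (via Lemma~\ref{anal}), which the paper says carries over verbatim to the CSP setting. Your Step~2 is essentially correct; there is a harmless slip in the displayed lower bound (the term $n_1\psi_{n_1}\,\pp\{Z_{n_2}\ge 1\}$ should carry an extra factor $\pp\{Z_{n_1}\ge 1\}$, since $n_i\psi_{n_i}=\E[\log Z_{n_i}\mid Z_{n_i}\ge 1]$ rather than $\E[\log Z_{n_i};\,Z_{n_i}\ge 1]$), but this factor is $1-o(1)$ and the rearrangement to $a_{n_1+n_2}\ge a_{n_1}+a_{n_2}-\eta(n_1,n_2)$ goes through as you describe.

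The one substantive gap is your ``Main obstacle'' paragraph. You correctly flag the ``$1+$'' as the crux of Step~1, but both of your proposed workarounds (interpolating $\log(\beta+Z)$ and sending $\beta\downarrow 1$, or the integral representation $\log(1+Z)=\int_0^1 Z/(1+sZ)\,ds$) are unnecessary detours, and neither is obviously clean as stated. The paper's device is direct: augment the solution set $S(F_n(t))$ by a single \emph{wildcard} configuration $*$ that satisfies every clause, so that $1+Z$ is the cardinality of the augmented space and the Gibbs average $\langle\cdot\rangle$ on it is always well-defined. Then $\langle\vp(X_I)\rangle\in(0,1]$ strictly, and the relevant expansion is simply the Taylor series
\[
\log\langle\vp(X_I)\rangle \;=\; -\sum_{l\ge 1}\frac{\langle\tilde\vp(X_I)\rangle^l}{l},\qquad \tilde\vp:=1-\vp,
\]
not the replica limit $\lim_{l\to 0}(Z^l-1)/l$, which is a red herring here. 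Since $\tilde\vp(*)=0$, any replica equal to $*$ annihilates the product $\prod_{r}\tilde\vp(X^{(r)}_I)$; the wildcard therefore drops out term by term, and for each $l$ one is left with $\Gamma_l$ evaluated at the block-wise empirical single-site replica measures on $\{-1,1\}^l$, exactly as in your Step~1 sketch. Convexity (H2) then gives the sign of each term, with no limit in $\beta$ and no $s$-integral needed.
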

The proof of this 
theorem is based on the following pseudo-superaddittivity lemma.
\begin{lemma}\label{guerracsp}
For any $\alpha$ and $\mu$ satisfying H2, let $Z_n:=Z(F(\alpha,n, \mu))$, we then have for any $n_1,n_2 \geq k$, 
\begin{align*}
 \E  \log (1+Z_{n_1 + n_2}) \geq  \E \log (1+Z_{n_1} Z_{n_2}).
\end{align*}
\end{lemma}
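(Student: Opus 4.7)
My plan is to follow the Guerra--Toninelli / Franz--Leone interpolation scheme used in Lemma~\ref{guerra}, but to replace the clause-specific computation there by the convexity hypothesis~H2, which is exactly what is needed to make the argument go through for a general $\mu$. By the Poissonization comment following Definition~1, when computing $\E\log(1+Z)$ I may replace the deterministic clause count by a Poisson variable with the same mean. Set $n=n_1+n_2$ and define an interpolating formula $F(t)$, $t\in[0,1]$, on $n$ variables as the disjoint union of three independent families of clauses, each clause drawing its type $\vp\sim\mu$ independently: a ``global'' family of $\mathrm{Poisson}(\alpha n t)$ clauses with indices iid uniform in $[n]$, a ``block-1'' family of $\mathrm{Poisson}(\alpha n_1(1-t))$ clauses with indices iid uniform in $[n_1]$, and a ``block-2'' family of $\mathrm{Poisson}(\alpha n_2(1-t))$ clauses with indices iid uniform in $\{n_1+1,\dots,n\}$. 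Then $Z(F(1))\stackrel{d}{=}Z_{n_1+n_2}$ and $Z(F(0))\stackrel{d}{=}Z_{n_1}Z_{n_2}$ with the two factors independent, so setting $\phi(t):=\E\log(1+Z(F(t)))$ reduces the lemma to proving $\phi'(t)\ge 0$.

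The standard Poisson derivative formula gives
\[
\phi'(t)=\alpha n\,\E\log\tfrac{1+Z^{[n]}(t)}{1+Z(t)}-\alpha n_1\,\E\log\tfrac{1+Z^{[n_1]}(t)}{1+Z(t)}-\alpha n_2\,\E\log\tfrac{1+Z^{[n_2]}(t)}{1+Z(t)},
\]
where $Z^{[S]}(t)$ is the partition function after adjoining one extra clause with indices iid uniform in $S$ and type $\vp\sim\mu$. Writing the new clause in Gibbs form, $Z^{[S]}=Z\langle\vp\rangle$ with $\langle\cdot\rangle$ the uniform average over the solutions of $F(t)$, so
\[
\log\tfrac{1+Z\langle\vp\rangle}{1+Z}=\log\!\Bigl(1-\tfrac{Z\langle 1-\vp\rangle}{1+Z}\Bigr)=-\sum_{l\ge 1}\tfrac{1}{l}\Bigl(\tfrac{Z}{1+Z}\Bigr)^{l}\langle 1-\vp\rangle^l,
\]
since the argument lies in $[0,1)$. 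Every term in the series is non-positive, so monotone convergence applied to the partial sums justifies exchanging the sum with all expectations in what follows.

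Expanding $\langle 1-\vp\rangle^l$ as an $l$-replica average $\langle\prod_{r=1}^l(1-\vp(x^{(r)}_{I_1},\dots,x^{(r)}_{I_k}))\rangle^{\otimes l}$ and averaging over $\vp\sim\mu$ and $I_1,\dots,I_k$ iid uniform on $S$ gives exactly $\Gamma_l(\nu^S(x^{(1)},\dots,x^{(l)}))$, where $\nu^S(\vec x)$ is the empirical column distribution on $\{-1,+1\}^l$ of the $l\times|S|$ matrix $(x^{(r)}_i)_{r\le l,\,i\in S}$. Collecting the three contributions, $\phi'(t)$ becomes
\[
-\sum_{l\ge 1}\tfrac{1}{l}\,\E\!\left[\tfrac{1}{(1+Z(t))^l}\sum_{\vec x\in\mathrm{sol}^{\,l}}\bigl(\alpha n\,\Gamma_l(\nu^{[n]}(\vec x))-\alpha n_1\,\Gamma_l(\nu^{[n_1]}(\vec x))-\alpha n_2\,\Gamma_l(\nu^{[n_2]}(\vec x))\bigr)\right].
\]

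Hypothesis~H2 now closes the argument. The identity $\nu^{[n]}(\vec x)=(n_1/n)\nu^{[n_1]}(\vec x)+(n_2/n)\nu^{[n_2]}(\vec x)$ combined with convexity of $\Gamma_l$ gives the pointwise bound $n\,\Gamma_l(\nu^{[n]})\le n_1\,\Gamma_l(\nu^{[n_1]})+n_2\,\Gamma_l(\nu^{[n_2]})$, so the bracket is $\le 0$ for every $\vec x$ and every $l$; the outer minus sign then yields $\phi'(t)\ge 0$, and integration gives $\phi(1)\ge\phi(0)$, which is the lemma. The main subtlety is that the ``$1+$'' correction in $\log(1+Z)$ appears to break the Franz--Leone replica identity; the point is that the weights $(Z/(1+Z))^l$ and the integration against $(1+Z(t))^{-l}$ are common to all three terms and, crucially, detached from the $\Gamma_l$ factors, so the pointwise convexity inequality produced by H2 is already enough to make the derivative non-negative.
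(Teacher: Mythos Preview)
Your proof is correct and follows essentially the same interpolation argument as the paper's. The only cosmetic difference is that the paper absorbs the ``$1+$'' by working with the augmented Gibbs measure on $S(F(t))\cup\{*\}$ (where $*$ is a dummy configuration satisfying every clause), so that $\log\tfrac{1+Z^{[S]}}{1+Z}=\log\langle\vp\rangle_{\mathrm{aug}}$ and the replica normalization $(1+Z)^{-l}$ appears automatically; you instead keep the factor $(Z/(1+Z))^{l}$ explicit, which gives the same expression once you sum over $\vec{x}\in\mathrm{sol}^{\,l}$. Your handling of the sign---convexity yielding $n\,\Gamma_l(\nu^{[n]})\le n_1\,\Gamma_l(\nu^{[n_1]})+n_2\,\Gamma_l(\nu^{[n_2]})$ and the outer minus then giving $\phi'(t)\ge 0$---is the correct direction.
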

The proof of Lemma \ref{guerracsp} is differed to Section \ref{proof4}. The proofs of Theorem \ref{convcsp} and Theorem \ref{res2} follow the same analytical arguments as for the $k$-SAT case and hence we omit the details here and refer to the proofs of Theorem \ref{convpsat} and Theorem \ref{res}.
%
%

\section{Proofs}
%
%
\subsection{Friedgut's theorem: A reminder}
\label{sec:Friedgut}

Let $\prob_p$ be the product measure on 
$\{0,1\}^N$, such that for $(z_1,\ldots,z_N) \in \{0,1\}^N$, 
$\prob_p (z_1,\ldots,z_N) = p^{\sum_{i=1}^N z_i}  (1-p)^{N - 
\sum_{i=1}^N z_i}$.
If $N= \binom{n}{k}2^k$, the space $\{0,1\}^N$ is isomorphic
to the space of $k$-CNF formulas on $n$ variables and the measure $\pp_p$ is the measure used in this paper for random $k$-SAT.
We say that $A \subseteq \{0,1\}^N$ is monotone if, for any 
$x \in \{0,1\}^N, y \in A$, $x \leq y$ (component-wise) implies 
$y \in A$. Such a collection of vectors $A$ is also called a monotone 
property, and note that by monotone it is really meant here monotonically 
increasing.
When $N= \binom{n}{k}2^k$, a property is called \emph{symmetric}, equivalently called a formula property, if it is invariant under $k$-CNF 
formula automorphisms (invariant under the action of the wreath product of the symmetric group $S_n$ with $k$ copies of $\mZ_2$). 
For instance, the UNSAT property of a  $k$-CNF formula
is monotone and symmetric.

Note that for $N$ fixed and a monotone symmetric property $A=A_N$, the 
function $p \mapsto \pp_p (A)$ is a monotonically increasing 
polynomial with $\pp_0 (A)=0$ and $\pp_1 (A)=1$
(strictly increasing if $A\neq \{0,1\}^N$).
We can therefore define for 
$\e \in [0,1]$, $p_\e$ such that $\pp_{p_\e} (A) = \e$ and we call 
$\delta(\e) = p_{1-\e} - p_\e$ the {\it critical interval}. 
We then say that the property $A$ (or the sequence of properties $A_N$) 
has a {\it sharp threshold} if for all $\e \in (0,1)$, $\delta(\e)/p_{1/2}$ 
tends to zero when $N$ increases. 
Note that for a monotone property $A$ having a 
sharp threshold, there exists $\hat{p}\in(p_{\e},p_{1-\e})$
such that $\pp_p(A) \rightarrow 0$ if $p \leq \hat{p}(1-\e)$ and 
$\pp_p(A) \rightarrow 1$ if $p \geq \hat{p}(1+\e)$. 

If instead, for $\e \in (0,1)$, 
$\delta(\e)/p_{1/2}$ is bounded away from zero, we say that $A$ has a 
{\it coarse threshold}. 
In typical examples, $\delta(\e)/p_{1/2}$ is bounded and hence admits
subsequential limits. In order to prove a sharp threshold for monotone properties,
it is therefore sufficient to rule out the case of a coarse threshold
(modulo reducing to subsequences).
If $A$ has a coarse threshold, then there exists $p^*\in(p_{\e},p_{1-\e})$ 
such that  $p^* \cdot  \frac{\partial}{\partial p} \pp_p (A)\big|_{p^*} 
\leq 1/C$ uniformly in $N$.

For a given formula $F$, we denote 
by $|F|$ the number of clauses in $F$. The average degree of a formula $F$ is the ratio between the number of variables and the number of clauses in $F$, and a balanced formula is a formula whose average degree is no less than that of any sub-formula. We also denote by $E(F)$ the expected number of sub-formulas isomorphic to $F$ in a random formula. 
%
We now state the result of Friedgut.
\begin{thm}[\cite{Friedgut}]\label{thm:GeneralSharp}
Let $0<\beta<1$. There exist functions $B=B(\epsilon,c)$, $b_1=b_1(\epsilon,c)$ and $b_2=b_2(\epsilon,c)$ such that for any $N$, $c>0$, $\e>0$, $p$ and any monotone symmetric family $A$ of $k$-CNF formulas with $n$ variables such that 
$p \cdot \frac{\partial}{\partial q} \pp_q (A)\big|_p \leq c$ and $\beta < \pp_p(A) <1-\beta$, 
there exists a formula $G$ satisfying:
\begin{itemize}
\item $G$ is balanced 
\item $b_1 < E (G) < b_2$ 
\item $|G| \leq B$ 
\item $\pp_p \{ A|G \} \geq 1-\epsilon$, where $\pp_p \{ A|G \}$ denotes the probability that a random formula belongs to $A$ conditioned on the appearance of a specific copy of $G$ in the random formula. 
\end{itemize}
\end{thm}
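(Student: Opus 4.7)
The plan is to combine three ingredients: the Margulis-Russo identity, Bourgain's junta-approximation theorem for monotone Boolean functions under biased product measures, and a symmetrization argument that exploits the invariance of $A$ under the formula-automorphism group. The derivative hypothesis will be translated into a bound on the total coordinate influence of $A$; this will force $A$ to be essentially determined by a small set of clauses; and finally the symmetry of $A$ will be used to pass from an arbitrary witnessing sub-formula to a balanced one with the required bounds on $E(G)$.

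First I would invoke the Margulis-Russo identity: for a monotone event $A$ under $\pp_p$ on $\{0,1\}^N$,
\[
\frac{\partial}{\partial p}\pp_p(A) \;=\; \sum_{i=1}^{N}\mathrm{Inf}_i^p(A),
\]
where $\mathrm{Inf}_i^p(A)=\pp_p(i \text{ is pivotal for }A)$ is the influence of coordinate $i$ under the biased measure. The hypothesis $p \cdot \partial_p \pp_p(A)|_p \leq c$ then yields a uniform bound of the form $\sum_i \mathrm{Inf}_i^p(A) \leq c/p$, which is the quantitative handle for the next step.

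Second, I would apply Bourgain's theorem on approximation by juntas in the $p$-biased cube: whenever a monotone Boolean function $f$ satisfies $\beta<\pp_p(f=1)<1-\beta$ and $\sum_i\mathrm{Inf}_i^p(f) \leq K$, there exists a set $S$ of coordinates with $|S|=O_{\e,K,\beta}(1)$ and an assignment $\sigma \in \{0,1\}^S$ such that $\pp_p(f=1\mid x_S=\sigma)$ is pushed to within $\e$ of $0$ or $1$. By monotonicity of $A$ one may take $\sigma$ to be the all-ones vector on $S$, obtaining
\[
\pp_p\bigl(A \;\big|\; x_i=1 \text{ for all } i\in S\bigr) \;\geq\; 1-\e.
\]
The coordinates in $S$ are specific clauses, whose conjunction gives a candidate formula $G_0$ of size at most some constant $B=B(\e,c)$.

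Third, I would upgrade $G_0$ to a balanced formula with controlled expected count $E(G)$. Because $A$ is invariant under the wreath product $S_n\wr\mZ_2$, the conditional probability above depends only on the isomorphism type of $G_0$, so one may replace $G_0$ by any isomorphic copy. To force balance, I would let $G\subseteq G_0$ be a sub-formula of maximum average degree; by an FKG/inclusion argument and monotonicity, conditioning on the presence of this denser $G$ gives at least as strong a boost. For the bounds $b_1<E(G)<b_2$: if $E(G)$ were too small then $G$ occurs with vanishing probability and conditioning on its presence cannot lift $\pp_p(A)<1-\beta$ up to $1-\e$, while if $E(G)$ were too large then $G$ is present in almost every formula and the conditional boost becomes trivial, contradicting $\pp_p(A)>\beta$. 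Making these contradictions quantitative gives explicit $b_1,b_2$ depending only on $\e$ and $c$.

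The main obstacle is Bourgain's junta theorem in the $p$-biased setting, whose proof requires delicate hypercontractive and Fourier-analytic estimates under the non-uniform measure $\pp_p$ (the uniform case reduces to the Friedgut-Kahn-Naor theorem). A secondary technical point is the balancing step: extracting a \emph{balanced} sub-formula while preserving the conditional-probability lower bound and simultaneously controlling $E(G)$ from both sides relies crucially on the symmetric group action on $A$ and on the monotonicity of the orbit-average of the conditional probability.
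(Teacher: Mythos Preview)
The paper does not prove this theorem: it is quoted from \cite{Friedgut} (with Bourgain's appendix) as a black box, so there is no in-paper argument to compare against.

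Your sketch names the right three ingredients, but the way you chain the first two contains a genuine quantitative gap. From $p\cdot\partial_p\pp_p(A)\le c$ and Margulis--Russo you obtain $\sum_i\mathrm{Inf}_i^p(A)\le c/p$, and you then invoke a junta-type theorem whose output set $S$ has size $O_{\e,K,\beta}(1)$ with $K=c/p$. But in the $k$-CNF setting $p=\alpha n/\big(\binom{n}{k}2^k\big)$ is of order $n^{1-k}$, so $K=c/p\to\infty$ with $n$, and any junta bound that depends on $K$ gives $|S|$ growing with $n$ rather than $|G|\le B(\e,c)$. This is precisely the obstacle that separates Friedgut's original junta theorem (bounded total influence $\Rightarrow$ close to a bounded junta) from what is needed here. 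Bourgain's argument does not pass through the intermediate bound $\sum_i\mathrm{Inf}_i\le c/p$; it works directly with the hypothesis $p\sum_i\mathrm{Inf}_i\le c$ --- the expected number of pivotal clauses --- and uses a $p$-biased Fourier/hypercontractive estimate to exhibit a booster whose size and expected count depend only on $\e$ and $c$.

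Two secondary issues in your third step. Your heuristic for the lower bound $b_1<E(G)$ is incorrect: conditioning on an arbitrarily rare event can certainly lift $\pp_p(A)$ from below $1-\beta$ all the way to $1$, so rarity of $G$ by itself yields no contradiction; in the actual proof the lower bound on $E(G)$ comes out of the Fourier estimate, not from such an a posteriori argument. And your balancing step points the wrong way: if $G\subseteq G_0$ then conditioning on $G$ forces fewer coordinates to $1$ than conditioning on $G_0$, so by monotonicity $\pp_p(A\mid G)\le\pp_p(A\mid G_0)$, not $\ge$. In Friedgut--Bourgain both the balance of $G$ and the two-sided control on $E(G)$ are outputs of the analytic argument rather than post-hoc extractions from an already-found $G_0$.
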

Note that conditioning of the appearance of a formula satisfying the above hypothesis is not the same as conditioning on a specific copy of such a formula (with specified variables).

%
%
\subsection{Proof of Theorem \ref{fried}}\label{prooffried}

In order to prove Theorem \ref{fried}, let us assume that the property $\{ Z(F(n,\alpha)) < 2^{n \phi} \}$ has a coarse threshold. Let us consider $p_{\rm s}$ in the critical interval of this property such that $p_{\rm s} \cdot \frac{\partial}{\partial p} \pp_p (A)\big|_{p_{\rm s}} \leq c$
and define $p_{\rm s}=\alpha_{\rm s} n/N$.
By compactness there exists a subsequence of size $n$ 
along which $P_n(\alpha_{\rm s},\phi )$ has a limit $P_{\infty}=[\e,1-\e]$.
For simplicity of notation, consider $P_{\infty}=1/2$ (the
general case is completely analogous). 
We can hence assume w.l.o.g. that $\alpha_{\rm s}=\alpha_{\rm s}(n)$ is such 
that $ P_n(\alpha_{\rm s},\phi )= 1/2$ (and $\alpha_{\rm s}$ is well defined since $P_n$ is continuous and strictly increasing in $\alpha$).

The proof consists in showing that there does not exists a specific ``short'' formula $G$ of the kind described in Theorem \ref{thm:GeneralSharp}, providing a contradiction with the coarse threshold assumption. 
We proceed by showing that the third assumption on $G$, the bounded number of clauses, leads directly to a contraction with the fourth one. 
This is shown with the following lemma.

\begin{lemma}\label{crit}
Let $G$ be a satisfiable formula with a finite number of variables and clauses. For any $\e>0$, we have for $n$ large enough
\begin{eqnarray*}
\pp \{ Z(F(n,\alpha_{\rm s})) < 2^{n \phi} | G\} &\leq& 1/2 +\e\, ,
\end{eqnarray*} 
where $\pp \{\,\cdot\, | G\}$ is interpreted as in the
statement of Theorem \ref{thm:GeneralSharp}.
\end{lemma}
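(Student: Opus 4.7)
The plan is to use a substitution argument: since $G$ is satisfiable, fix a satisfying assignment $\sigma^{*}$ of $G$ on its variable set $V(G)$ (with $v=|V(G)|=O(1)$), and let $F|_{\sigma^{*}}$ denote the formula on $n-v$ variables obtained by plugging $\sigma^{*}$ in for the variables in $V(G)$. Since any satisfying assignment of $F|_{\sigma^{*}}$ extends uniquely via $\sigma^{*}$ to a satisfying assignment of $F\cup G$, and since conditioning on a specific copy of $G$ amounts (by clause independence) to the distribution of $F\cup G$ with $F\sim F(n,\alpha_{\rm s})$, we obtain the deterministic inequality
\begin{equation*}
Z(F\cup G)\;\geq\;Z(F|_{\sigma^{*}}).
\end{equation*}

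Next, I would control the distribution of $F|_{\sigma^{*}}$. Because $v=O(1)$ and each clause is supported on $k$ uniform-random variables, the expected number of clauses of $F(n,\alpha_{\rm s})$ touching $V(G)$ is $O(1)$, and the probability that any clause becomes trivially false under $\sigma^{*}$ (all $k$ literals in $V(G)$ with violating polarities) is $O(n^{1-k})=o(1)$. Thus, with probability $1-o(1)$, $F|_{\sigma^{*}}$ is the union of a random $k$-SAT formula on $n-v$ variables at effective density $\alpha'_{\rm s}=\alpha_{\rm s}+O(1/n)$ together with an $O(1)$ number of shorter clauses. Each shorter clause forbids at most half of the assignments, so jointly they cost at most a bounded multiplicative factor $c(G)^{-1}$ in the solution count, giving
\begin{equation*}
Z(F\cup G)\;\geq\;c(G)\cdot Z(F(n-v,\alpha'_{\rm s}))\qquad \text{whp}.
\end{equation*}

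Rewriting $2^{n\phi}/c(G)=2^{(n-v)\phi'}$ with $\phi'=\phi+O(1/n)$ and using monotonicity, this yields
\begin{equation*}
\pp\{Z(F\cup G)<2^{n\phi}\}\;\leq\;P_{n-v}\bigl(\alpha'_{\rm s},\,\phi+O(1/n)\bigr)+o(1).
\end{equation*}
I would then conclude by invoking the coarse-threshold assumption: the bound $p_{\rm s}\cdot\partial_{p}\pp_{p}(A)|_{p_{\rm s}}\leq c$ implies that $P_{n-v}$ varies by $o(1)$ under $O(1/n)$ perturbations of the first argument, and along the subsequence where $P_{n}(\alpha_{\rm s},\phi)\to 1/2$ we may (by compactness) pass to a further subsequence so that $\phi\mapsto P_{n-v}(\alpha_{\rm s},\phi)$ converges pointwise to a non-decreasing limit, continuous except at countably many points. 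At a continuity point $\phi$, the $O(1/n)$ shift in the exponent becomes negligible, and we obtain $P_{n-v}(\alpha'_{\rm s},\phi+O(1/n))\leq 1/2+\varepsilon$ for $n$ large enough in the subsequence.

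The main obstacle is precisely this last transfer: the coarse-threshold hypothesis bounds perturbations in the density parameter $\alpha$ directly, so absorbing the $O(1/n)$ shift in the exponent $\phi$ requires a compactness/subsubsequencing argument to extract a limit continuous at the given $\phi$ (the remaining countable set of possible bad $\phi$'s is naturally accommodated by the exceptional set $\Cble$ of Theorem \ref{res}).
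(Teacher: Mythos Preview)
Your substitution step and the structural analysis of $F|_{\sigma^*}$ match the paper, but there is a genuine gap in your third step. The assertion that ``each shorter clause forbids at most half of the assignments, so jointly they cost at most a bounded multiplicative factor $c(G)^{-1}$ in the solution count'' is false as a deterministic (or even w.h.p.) inequality. A $j$-clause indeed forbids a $2^{-j}$ fraction of \emph{all} assignments in $\{0,1\}^{n-v}$, but it may forbid \emph{every} satisfying assignment of the $k$-clause part of $F|_{\sigma^*}$; there is no lower bound of the form $Z(F' \wedge C^{(j)}) \ge c\,Z(F')$. This is precisely the obstruction that makes $\log Z$ resistant to martingale-type arguments. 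The paper circumvents it not by a pointwise bound on $Z$ but by a distributional comparison (Remark~\ref{rem1}): a random $1$-clause can be absorbed into $h(n)\to\infty$ random $k$-clauses at the price of a factor $k$ in the threshold, because a random $k$-clause is a disjunction of $k$ i.i.d.\ random literals and hence
\[
\pp\{Z(f\wedge C^{(k)})<ky\}\ \ge\ \pp\{Z(f\wedge C^{(1)})<y\}^k.
\]
Iterating over a slowly growing number $h(n)$ of $k$-clauses then moves the density by $o(1/\sqrt{n})$, which is handled by the uniform Lipschitz bound $\partial_\alpha P_n\le C_1\sqrt{n}$ of Lemma~\ref{lip1}(i).

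Your treatment of the $\phi$-shift also does not fit the logical structure. Lemma~\ref{crit} is used to prove Theorem~\ref{fried}, which is stated for \emph{every} $\phi\in[0,1)$; the exceptional set $\Cble$ in Theorem~\ref{res} is a set of \emph{densities} $\alpha$, not exponents $\phi$, and it arises only later from the plateaus of $\phi_s$. So you cannot discard countably many $\phi$'s here. The paper instead proves a quantitative estimate (Lemma~\ref{lip1}(ii)),
\[
P_n(\alpha,\phi)-P_n(\alpha,\psi)\ \le\ C_2(\phi-\psi)\sqrt{n}\log n+o(1),
\]
whose proof again rests on the clause-absorption trick: adding $\Theta(\log n)$ random clauses halves the expected solution count, and Markov's inequality converts this into a shift in the threshold exponent. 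This makes the $O(1/n)$ shift in $\phi$ negligible for every $\phi$, with no compactness or subsequencing needed. The coarse-threshold hypothesis, incidentally, only bounds $\partial_\alpha P_n$ at the single point $\alpha_{\rm s}$, so it does not by itself control $O(1/n)$ perturbations; the paper uses the uniform bound of Lemma~\ref{lip1}(i) for that.
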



Before proceeding to the proof of Lemma \ref{crit}, we need the following technical estimates for the case $\phi >0$. 
\begin{lemma}\label{lip1} 
For any $n, \alpha, \phi, \psi$, with $\phi \geq \psi$, 
\begin{align*}
& \text{(i)} \quad \frac{\partial\phantom{\alpha}}{\partial \alpha} P_n( \alpha, \phi) \leq C_1 \sqrt{n} ,\\
& \text{(ii)} \quad 
P_n(\alpha, \phi) - P_n(\alpha, \psi) \leq C_2(\phi -\psi) \sqrt{n} \log(n) + o(1),
\end{align*}
where $C_1$ and $C_2$ do not depend on $n$.
\end{lemma}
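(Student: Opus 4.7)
I would prove (i) by differentiating $P_n$ in $\alpha$ directly in the binomial clause model, and prove (ii) by a clause-addition coupling that converts an increment in $\phi$ into an increment in $\alpha$, so that (i) can be invoked.

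For (i), I switch to the binomial model in which each of the $N=\binom{n}{k}2^k$ possible clauses is present independently with probability $p=\alpha n/N$, and let $m$ denote the random number of clauses. Writing $f(F)=\mathds{1}\{Z(F)<2^{n\phi}\}$, the standard product-measure identity
\[
\frac{\partial}{\partial p}\E_p[f]=\frac{\Cov_p(m,f)}{p(1-p)}
\]
together with $|f|\le 1$ (so $\Var(f)\le 1$) and $\Var(m)=Np(1-p)$ gives, by Cauchy--Schwarz, $|\partial_p\E_p[f]|\le\sqrt{Np(1-p)}/[p(1-p)]$. The chain rule $\partial p/\partial\alpha=n/N$ then yields $|\partial_\alpha P_n|\le\sqrt{n/[\alpha(1-p)]}=O(\sqrt n)$, proving (i).

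For (ii), I couple $F=F(n,\alpha)$ to $F'=F(n,\alpha+\Delta\alpha)$ by adjoining to $F$ a fresh block of $m'=\lfloor n\Delta\alpha\rfloor$ i.i.d.\ uniformly random $k$-clauses. Conditional on $F$ with $Z(F)=z$ solutions, each solution survives the new block (jointly across clauses) with probability $q:=(1-2^{-k})^{m'}=2^{-c_km'}$, where $c_k:=-\log_2(1-2^{-k})$, so $\E[Z(F')\mid F]=zq$. Choosing $c_k\Delta\alpha=(\phi-\psi)+C'(\log n)/n$ ensures $\E[Z(F')\mid F]\le 2^{n\psi}/n^{C'}$ on the event $\{Z(F)<2^{n\phi}\}$, and a conditional Chebyshev bound (using the second-moment estimate discussed next) gives $Z(F')<2^{n\psi}$ with probability $1-o(1)$. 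Hence $P_n(\alpha,\phi)\le P_n(\alpha+\Delta\alpha,\psi)+o(1)$; subtracting $P_n(\alpha,\psi)$ and applying (i) yields $P_n(\alpha,\phi)-P_n(\alpha,\psi)\le C_1\Delta\alpha\sqrt n+o(1)$, which after absorbing the buffer $O((\log n)/\sqrt n)$ is at most $C_2(\phi-\psi)\sqrt n\log n+o(1)$.

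The main obstacle is the conditional second-moment estimate for $Z(F')$. Writing $Z(F')=\sum_{x\in\mathrm{sol}(F)}I_x$, the survival indicators $I_x$ and $I_y$ are correlated through the shared new clauses, and a direct calculation gives
\[
\Cov(I_x,I_y)=\Bigl(1-2^{1-k}+2^{-k}\binom{n-d(x,y)}{k}\big/\binom{n}{k}\Bigr)^{\!m'}-(1-2^{-k})^{2m'},
\]
which vanishes to leading order for $d(x,y)\approx n/2$ but can be comparable to $q$ at small Hamming distance. I would split the pair sum into close and typical Hamming regimes: bounding the close-pair contribution by the trivial $\Cov\le q$ times the $\mathrm{poly}(n)$ count of close pairs per solution, and the typical-pair contribution by a Taylor expansion of the exponent around $d=n/2$. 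This yields $\Var(Z(F')\mid F)=\mathrm{poly}(n)\cdot\E[Z(F')\mid F]$, which is enough to close Chebyshev because $\E[Z(F')\mid F]$ is exponentially large on the event of interest when $\psi>0$; the boundary regime $\psi\to 0$ would require a separate Markov-style estimate. This pairwise covariance bookkeeping, and the need to make the Chebyshev constant uniform in the conditioning, is where I expect the proof to be most delicate.
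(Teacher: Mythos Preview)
Your proof of (i) is correct and essentially identical to the paper's: both differentiate the binomial product measure, obtain the factor $(m-Np)$, and bound the resulting sum by Cauchy--Schwarz against $\Var(m)=Np(1-p)$.

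For (ii), your coupling strategy---adjoining a block of $m'$ fresh random clauses to convert a $\phi$-increment into an $\alpha$-increment, then invoking (i)---is exactly the paper's approach. The difference, and the gap, is in how you pass from ``$\E[Z(F')\mid F]$ is small'' to ``$Z(F')$ is small with high probability.'' You propose a conditional Chebyshev bound and flag the required second-moment estimate as ``the main obstacle.'' But this obstacle is self-inflicted: you only need an \emph{upper} tail bound on the nonnegative variable $Z(F')$, so Markov's inequality suffices. On the event $\{Z(F)<2^{n\phi}\}$ you already have $\E[Z(F')\mid F]=Z(F)(1-2^{-k})^{m'}\le 2^{n\psi}/n^{C'}$, whence
\[
\pp\{Z(F')\ge 2^{n\psi}\mid F\}\le \frac{\E[Z(F')\mid F]}{2^{n\psi}}\le n^{-C'}
\]
immediately---no variance needed. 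This is precisely what the paper does (phrased there as a one-step halving $2^{a+1}\to 2^a$ of the threshold, with the Markov step written as $\pp\{Z(F_{\alpha n+l})\le 2^a\mid \E_l Z(F_{\alpha n+l})\le T2^a\}\ge 1-T$).

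Your Chebyshev route is not merely unnecessary; the variance bound you sketch does not hold in general. The claim of a ``$\mathrm{poly}(n)$ count of close pairs per solution'' is a statement about the geometry of $\mathrm{sol}(F)$, which is entirely uncontrolled on the bare event $\{Z(F)<2^{n\phi}\}$. If the solutions of $F$ happen to cluster in a small Hamming region (nothing forbids this), the close-pair covariance sum is of order $Z(F)^2 q$ rather than $Z(F)\,q\cdot\mathrm{poly}(n)$, and Chebyshev gives nothing useful. So the ``delicate bookkeeping'' you anticipate would in fact fail without extra input on the solution geometry; replacing Chebyshev by Markov closes the proof cleanly and also removes your worry about the boundary case $\psi\to 0$.
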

\begin{proof} 
$(i)$ By definition
\begin{align*}
P_n(\alpha, \phi)
=  \sum_{0 \leq m \leq N} {N \choose m} p^m (1-p)^{N-m} \pp \{ Z(F(n,\alpha)) <  2^{n \phi} | |F|=m\},
\end{align*}
and $p=(n\alpha)/N$, thus
\begin{align*}
\frac{\partial}{\partial \alpha}   p^m (1-p)^{N-m}
= \frac{n}{N}   p^{m-1} (1-p)^{N-m-1}(m-Np)
\end{align*}
and
\begin{eqnarray*}
\frac{\partial}{\partial \alpha} P_n(\alpha, \phi) &=&  \frac{n}{N}   p^{-1} (1-p)^{-1}\\
&&\sum_{0 \leq m \leq N} {N \choose m} p^{m} (1-p)^{N-m}(m-Np) \pp \{ Z(F(n,\alpha)) <  2^{n \phi} | |F|=m\}\\
&\leq&  \frac{n}{N}   p^{-1} (1-p)^{-1} \E[(|F|-Np),|F|\geq Np]\\
&\leq&   \frac{n}{N}   p^{-1} (1-p)^{-1} (\E(|F|-Np)^2)^{1/2}\\
&=&   \frac{n}{N}   p^{-1} (1-p)^{-1} (Np(1-p))^{1/2}\\
&\leq& C_1 \sqrt{n},
\end{eqnarray*}

$(ii)$ Note that because of Lemma \ref{lip1} $(i)$, 
$$ \alpha \mapsto \pp \{ Z(F(n,\alpha)) < 2^{m} \}$$ is Lipschitz with constant $C_1 \sqrt{n}$. Therefore, it is sufficient to show that for any $a \geq 0$
\begin{eqnarray}
 \pp \{ Z(F(n,\alpha)) < 2^{a+1} \} \leq  \pp \{ Z(F(n,\alpha+A\log(n)/n)) < 2^{a}\}  + c \log(n)/\sqrt{n}, \label{l}
 \end{eqnarray}
for some constants $c$ and $A$ (possibly depending on $k$), since then 
\begin{eqnarray*}
 &&\pp \{ Z(F(n,\alpha)) < 2^{a+1} \} - \pp \{ Z(F(n,\alpha)) < 2^{a} \} \\
 &\leq&  \pp \{ Z(F(n,\alpha+A\log(n)/n)) < 2^{a}\}- \pp \{ Z(F(n,\alpha)) < 2^{a} \}  + c\log(n)/\sqrt{n}\\
 &\leq& C_2\log(n)/\sqrt{n},
\end{eqnarray*}
for some constant $C_2$.
We first verify \eqref{l} for the model where the formulas have exactly $\alpha n$ clauses drawn uniformly at random, using in that case the notation $F_{\alpha n}$. The result for the Binomial model follows then from standard monotonicity arguments.  
We will show that there exists $\theta \in (0,1)$ such that for any $a\geq 0$ and any integer $l\geq 1$,
\begin{eqnarray}
 \pp \{ Z(F_{\alpha n}) < 2^{a+1} \} -\pp \{ Z(F_{\alpha n+l}) < 2^{a}\} \leq 2 \theta^l. \label{finite}
 \end{eqnarray}
Note that for a deterministic formula $D$, and $C$ a $k$-clause uniformly drawn in $C_k(n)$,
$$\E_C Z( D \wedge C) = (1-\frac{1}{2^k}) Z (D),$$
hence for $l$ uniformly drawn $k$-clauses 
$$\E_{C_1,\ldots,C_l} Z(D \wedge C_1 \wedge \ldots \wedge C_l) \leq (1-\frac{1}{2^k})^l Z (D).$$
Therefore, denoting by $\E_{l}= \E_{C_1,\dots,C_l}$ the expectation with 
respect to $C_1,\dots,C_l$, we get
$$ \pp \{ Z(F_{\alpha n}) \leq 2^{a+1}\} \leq \pp \{\E_{l} Z (F_{\alpha n +l })  \leq 2^{a+1} (1-\frac{1}{2^k})^l  \}, $$
and defining $T := 2 (1-2^{-k})^l ,$ we have
\begin{eqnarray*}
\pp\{ Z(F_{\alpha n +l }) \leq 2^a \} &\geq& \pp\{ Z(F_{\alpha n +l }) \leq 2^a | \E_{l} Z (F_{\alpha n +l })  \leq T 2^{a} \} \pp\{ \E_{l} Z (F_{\alpha n +l })  \leq T 2^{a} \} \\
&\geq& (1-T) \pp\{ \E_{l} Z (F_{\alpha n +l })  \leq T 2^{a} \},
\end{eqnarray*}
where last inequality follows from Markov's inequality.
Thus,
$$ \pp \{ Z(F_{\alpha n}) < 2^{a+1} \} -\pp \{ Z(F_{\alpha n+l}) < 2^{a}\} \leq  2 (1-\frac{1}{2^k})^l.$$
By setting $l=A(k)\log(n)$, previous upper bound is $2 n^{A(k) \log (1-\frac{1}{2^k})}$, hence by taking $A(k)$ appropriately we get \eqref{l}.
%
\end{proof}

We are now in position to prove Lemma \ref{crit}.

\begin{proof}[Proof of Lemma \ref{crit}.]  Let $G$ be a satisfiable formula with a bounded number of clauses and, say, $r$ variables. And let us assume without loss of generality that $G$ contains only variables $x_1,\ldots,x_r$, and
that it is satisfied when all variables
are set to true. We then have
\begin{eqnarray*}
\pp \{ Z(F(n,\alpha_{\rm s})) < 2^{n \phi} | G\} 
&\leq& \pp \{ Z(F(n,\alpha_{\rm s})|_{x_1=\ldots=x_r=T} ) < 2^{n \phi} \} ,
\end{eqnarray*}
where $T$ refers to the true assignment. 
The random formula $F(n,\alpha_{\rm s})|_{x_1=\ldots=x_r=T}$ has now a different structure and probability distribution. Let us denote by $F_{*}$ an equivalently distributed random formula, which contains clauses of size 1 to $k$ and has only $n-r$ variables. 
Since $|C_k(n)| ={n \choose k} 2^k$, the number of $l$-clauses appearing in $F_*$, denoted by $S_l$, satisfies 
\begin{eqnarray}
\E S_k &=& \Theta(n),\\
\E S_{k-1} &=&  O(1),\label{O1}
\end{eqnarray}
and more generally 
\begin{eqnarray}
\E S_{k-i} = O(n^{1-i}),\quad \forall 1 \leq i \leq k-1\, . \label{O2}
\end{eqnarray}
Hence, defining a random formula $F_{**}$, which contains only $k$-clauses and $(k-1)$-clauses, with the $k$-clauses selected independently in $C_k(n-r)$ with probability $p=\frac{\alpha_{\rm s} n}{{n \choose k} 2^k}$, and exactly $d$ 
clauses of size $(k-1)$ uniformly selected in $C_{k-1}(n-r)$, 
we get that for any $\tau >0$ and $n$ large enough
\begin{eqnarray*}
\pp \{ Z(F(n,\alpha_{\rm s})) < 2^{n \phi} |  G\} 
&\leq& \pp \{ Z(F_{**}) < 2^{n \phi} \} +  \tau\, ,
\end{eqnarray*}
provided $d$ is large enough.
Since 1-clauses are more constraining than $(k-1)$-clauses, we can upper bound our estimate by replacing each $k$-clause by the disjunction of its $k$ 
literals. 
Moreover, the clauses on the $n-r$ variables are drawn with probability $p(n,\alpha_{\rm s})=\alpha_{\rm s} n/\{ {n \choose k} 2^k\}$, hence, by drawing them from $p(n-r,\alpha_{\rm s})$ instead we get, for 
$D=d(k-1)$,
\begin{align}
& \pp \{ Z(F(n,\alpha_{\rm s})) < 2^{n \phi} |  G\} 
\leq \pp \{ Z(F(n-r,\alpha_{\rm s})\wedge_{i=1}^D C^{(1)}_i)  < 2^{n \phi} \} +  \tau. \label{D}
\end{align}  
We now prove a useful fact.
\begin{remark}\label{rem1}
Let $f$ be a Boolean formula on $n$ variables, $C^{(l)}_i$ be $l$-clauses independently and uniformly selected in $C_l(n)$ and $y \in \mR$.
Then for any unbounded increasing sequence $h(n)$ and for any $\varepsilon>0$, we can take $n$ large enough such that,
$$ \pp \{ Z (f\wedge_{i=1}^D C^{(1)}_i) < y \} \leq \pp \{ Z (f\wedge_{i=1}^{h(n)} C^{(k)}_i )< k^D y \}  + \varepsilon.$$
\end{remark}
\begin{proof}
In order to prove this fact, we check that for $n$ large enough,
\begin{align}
\pp \{ Z ( f\wedge C^{(1)}_1 ) < y \} \leq \pp \{ Z ( f\wedge_{i=1}^{h(n)/D} C^{(k)}_i )< ky \}  + \varepsilon/D. \label{g}
\end{align}
Note that for any $g \geq 1$
\begin{align}
\pp \{ Z (f\wedge_{i=1}^{g} C^{(k)}_i ) < ky\} \geq \pp \{ \min_{1 \leq i \leq g}Z (f\wedge C^{(k)}_i )< k y \} \label{1} 
\end{align}
Moreover,
\begin{align}
\pp \{ Z ( f\wedge C^{(k)}_1) < ky \} &= \pp \{ Z (f\wedge \vee_{i=1}^{k} C^{(1)}_i )< ky \} \notag \\
 &\geq \pp \{ \sum_{i=1}^{k} Z ( f\wedge C^{(1)}_i )< ky \}\notag\\
 &\geq \pp \{ Z (f\wedge C^{(1)}_i) < y,\,\forall 1\leq i \leq k \}\notag\\
 &= \pp \{ Z (f\wedge C^{(1)}_1 )< y\}^k. \label{2} 
\end{align}
Therefore, putting \eqref{1} and \eqref{2} together, we get
\begin{eqnarray}
\pp \{ Z (f\wedge_{i=1}^{g} C^{(k)}_i) < ky\} &\geq&1- (1-  \pp \{ Z ( f\wedge C^{(1)}_1 )< y\}^k )^g, \notag
\end{eqnarray}
but this implies that we can take $g$ large enough, such that 
\begin{align}
 \pp \{ Z ( f\wedge C^{(1)}_1 )< y\} \leq \pp \{ Z( f\wedge_{i=1}^{g} C^{(k)}_i) < ky\} +\varepsilon/D, \notag
\end{align}
which proves the remark.
\end{proof}
 
We now can use our remark to upper bound the estimate in Eq.~\eqref{D}
\begin{align}
\pp \{ Z(F(n,\alpha_{\rm s})) < 2^{n \phi} |  G\} 
\leq \pp \{ Z(F(n-r,\alpha_{\rm s})\wedge_{i=1}^{h(n)} C^{(k)}_i)  < k^D 2^{n \phi} \} + 2\tau \label{t1}
\end{align}  
and by taking $t$ large enough 
\begin{align}
\pp \{ Z(F(n-r,\alpha_{\rm s})\wedge_{i=1}^{h(n)} C^{(k)}_i)  < k^D 2^{n \phi} \} \leq
\pp \{ Z(F(n-r,\alpha_{\rm s}+th(n)/(n-r))  < k^D 2^{n \phi} \}  +\tau. \label{t2}
\end{align}  
Defining $H(n):=th(n)/(n-r)$, we get from Lemma \ref{lip1} $(ii)$ 
\begin{align}
\pp \{ Z(F(n-r ,\alpha_{\rm s} +H(n) ) ) < k^D 2^{r \phi} 2^{(n -r) \phi} \}
\leq \pp \{ Z(F(n-r ,\alpha_{\rm s}+H(n)) )< 2^{(n-r)  \phi} \} +\tau. \label{t3}
\end{align}  
Note that by Lemma \ref{lip1} $(ii)$, the inequality in \eqref{t3} holds for a clause density which is independent of $n$, although here the clause density is $\alpha_{\rm s} (n)+H(n)$. Since we will pick $H(n)$ to be $o(1/\sqrt{n})$, hence the variation of $H(n)$ can be neglected. Regarding the variation of $\alpha_{\rm s} (n)$, note that this sequence fluctuates on a compact interval (for a fixed $\phi<1$, on an interval contained in $(0,A]$ where $A$ is an upper bound on the critical threshold of Friedgut \cite{Friedgut}), then one can check that the gap in this inequality, i.e., $P_n(\alpha,\phi) - P_n(\alpha,\phi + 1/n)$, tends to zero uniformly in $\alpha$, leading to the claimed inequality. 

Putting \eqref{t1}, \eqref{t2} and \eqref{t3} together, we get  
\begin{align}
\pp \{ Z(F(n,\alpha_{\rm s})) < 2^{n \phi} | G\}& \leq
\pp \{ Z(F(n-r ,\alpha_{\rm s}+H(n)) )< 2^{(n-r)  \phi} \} + 4 \tau \notag \\
& = P_{n-r}(\alpha_{\rm s}+H(n),\phi)  +4\tau \notag \\
&\leq P_n(\alpha_{\rm s}+2H(n),\phi)  +5\tau. \label{var}
\end{align}  
To see that the inequality \eqref{var} holds, let us check that for $\alpha,\phi$ fixed, $g(n)$ increasing and $n$ large enough
\begin{align}
P_{n-1}(\alpha,\phi) \leq P_n(\alpha + g(n)/n,\phi) + \tau.\label{var2}
\end{align}
The inequality \eqref{var} can then be verified by an appropriate choice of $g(n)$ and by using a similar argument as discussed previously regarding the dependence in $n$ of the clause density.
Recall that a random formula in $F(n-1,\alpha)$ is drawn by picking each clause in $C_k(n-1)$ with probability $p(n-1,\alpha)= \frac{\alpha (n-1)}{{n-1 \choose k}2^k}$. By a coupling argument, since $p(n-1,\alpha) > p(n,\alpha)$, one can equivalently draw a first formula $F_1$ by picking each clause in $C_k(n-1)$ with probability $p(n,\alpha)$ and a second formula $F_2$ by picking each clause in $C_k(n-1)$ with probability $p(n-1,\alpha)-p(n,\alpha)$; creating the formula $F_1 \wedge F_2$. Note that a random formula $F(n,\alpha)$ picks each clause in $C_k(n-1)$ and also in $\{C_k(n)- C : C \in C_k(n-1)\}$ with probability $p(\alpha,n)$. Hence
\begin{align}
& P_{n-1}(\alpha,\phi) =  \pp\{ Z(F(n-1,\alpha)) < 2^{(n-1) \phi}\}\notag \\
&= \pp\{ Z(F_1 \wedge F_2) < 2^{(n-1) \phi}\} \notag \\
& \leq \pp\{ Z(F(n,\alpha) \wedge F_2) < 2^{(n-1) \phi}\}. \label{up1}
\end{align}
The expected number of clauses in $F_2$ is given by $\E |F_2| = |C_k(n-1)| (p(n-1,\alpha)-p(n,\alpha)) = O(1)$, hence, we can upper bound \eqref{up1} by replacing $F_2$ with a constant number of random 1-clauses and use Remark \ref{rem1} (as done above) to conclude that
\begin{align}
& P_{n-1}(\alpha,\phi) \leq \pp\{ Z(F(n,\alpha + g(n)/n)) < K 2^{(n-1) \phi}\}
\end{align}
for a constant $K$ and an increasing function $g(n)$.
(Note that the 1-clauses are drawn within the set of $n-1$ variables instead of $n$ variables, but this does not change the conclusion). 
Finally, using Lemma \ref{lip1} $(ii)$, we get
\begin{align}
& P_{n-1}(\alpha,\phi)  \leq \pp\{ Z(F(n,\alpha + g(n)/n) < 2^{n \phi}\} + \tau = P_{n}(\alpha + g(n)/n,\phi) + \tau,
\end{align}
which proves \eqref{var2}. Hence, we have
\begin{align*}
\pp \{ Z(F(n,\alpha_{\rm s})) < 2^{n \phi} | G\}
&\leq P_n(\alpha_{\rm s}+2H(n),\phi)  +5\tau 
\end{align*}  
and by choosing $h(n)=o(\sqrt{n})$ (increasing), i.e., $H(n)=o(1/\sqrt{n})$, and using Lemma \ref{lip1} $(i)$, we get 
\begin{align*}
\pp \{ Z(F(n,\alpha_{\rm s})) < 2^{n \phi} |  G\}  \leq P_n(\alpha_{\rm s},\phi ) +6 \tau. 
\end{align*}  
\end{proof}

\subsection{Proofs of Theorem \ref{convpsat} and Lemma \ref{guerra}}\label{proof2}

\begin{proof}[Proof of Lemma \ref{guerra}.] We refer to the proof of Lemma \ref{guerracsp} which is more general. 
\end{proof}

In order to prove Theorem \ref{convpsat}, we first need the following technical lemma.
%
\begin{lemma}\label{anal}
Let $\Delta(n) = O(n/(\log n)^{1+\e})$ for some $\e>0$, and $t(n)=o(n)$.
Let $f(\cdot)$ be positive, such that $f(n)/n$ is bounded above and
$$f(n_1 + n_2) + \Delta(n_1 + n_2 ) \geq  f(n_1) + f(n_2) ,\quad \forall n_1, n_2 \geq t(n_1+n_2).$$
Then $f(n)/n$ converges.
\end{lemma}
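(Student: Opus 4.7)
Set $\phi := \limsup_n f(n)/n$, finite by the assumed upper bound on $f(n)/n$. I aim to show $\liminf_n f(n)/n \geq \phi$, which gives convergence. Fix $\eta>0$ and pick $N^*$ large enough that (i) $f(N^*)/N^* \geq \phi - \eta$, (ii) $t(m) \leq m/2$ for every $m \geq N^*$ (possible since $t(m)=o(m)$), and (iii) the error bound $C'/(\log N^*)^\e$ appearing in Step 1 below is at most $\eta$.

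\emph{Step 1 (balanced binary-tree iteration).} Given $q$, merge $q$ copies of $N^*$ along a balanced binary tree of depth $\lfloor\log_2 q\rfloor$. At depth $k$ from the root there are $2^k$ nodes of size $\approx qN^*/2^k \geq N^*$, so all pairwise merges satisfy $n_1,n_2 \geq t(n_1+n_2)$ by (ii). Iterated pseudo-superadditivity then yields
\begin{equation*}
f(qN^*) \;\geq\; q\,f(N^*) \;-\; \sum_{k=0}^{\lfloor\log_2 q\rfloor-1} 2^k\,\Delta(qN^*/2^k).
\end{equation*}
Inserting $\Delta(m) \leq Cm/(\log m)^{1+\e}$, a factor $2^k$ cancels and the error becomes
\begin{equation*}
\sum_{k=0}^{\lfloor\log_2 q\rfloor-1} 2^k\,\Delta(qN^*/2^k) \;\leq\; C\,qN^* \sum_{k\geq 0}\frac{1}{(\log(qN^*) - k\log 2)^{1+\e}} \;\leq\; \frac{C'\,qN^*}{(\log N^*)^\e},
\end{equation*}
the last inequality comparing the arithmetic-progression sum to $\int_{\log(2N^*)}^\infty u^{-(1+\e)}\,du$. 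Dividing by $qN^*$ gives $f(qN^*)/(qN^*) \geq f(N^*)/N^* - C'/(\log N^*)^\e$.

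\emph{Step 2 (handling the remainder).} For large $n$, write $n = qN^* + r$ with $q := \lfloor n/N^*\rfloor$, $r \in [0,N^*)$. If $r \geq t(n)$, apply pseudo-superadditivity directly to $(qN^*, r)$; otherwise split $n = (q-1)N^* + (N^*+r)$, both pieces $\geq N^* \geq t(n)$ by (ii). Using $f\geq 0$, in either case $f(n) \geq f(q'N^*) - \Delta(n)$ for some $q' \in \{q-1,q\}$ satisfying $q'N^*/n \geq 1 - 2N^*/n$. Combining with Step 1 and dividing by $n$,
\begin{equation*}
\frac{f(n)}{n} \;\geq\; \frac{q'N^*}{n}\cdot \frac{f(N^*)}{N^*} \;-\; \frac{C'}{(\log N^*)^\e} \;-\; \frac{\Delta(n)}{n}.
\end{equation*}
Since $\Delta(n)/n = O(1/(\log n)^{1+\e}) \to 0$ and $q'N^*/n \to 1$, letting $n\to\infty$ with $N^*$ fixed yields $\liminf_n f(n)/n \geq f(N^*)/N^* - C'/(\log N^*)^\e \geq \phi - 2\eta$. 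As $\eta>0$ was arbitrary, $\liminf_n f(n)/n \geq \phi$, proving the lemma.

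\emph{Main obstacle.} The quantitative heart of the proof is the error analysis in Step 1: a naive left-to-right iteration of the inequality accumulates an error $\sim \sum_{j=1}^q \Delta(jN^*) \sim q^2 N^*/(\log q)^{1+\e}$, whose per-variable cost $\sim q/(\log q)^{1+\e}$ \emph{diverges} as $n\to\infty$. Only the balanced-tree arrangement converts the accumulated error into a telescoping sum of $1/u^{1+\e}$ over an arithmetic progression, which is finite precisely because $\e>0$. This explains why the hypothesis needs $\Delta(n)=O(n/(\log n)^{1+\e})$ rather than merely $o(n/\log n)$, and matches exactly the assumption $P_n(\alpha,0)=O(1/(\log n)^{1+\e})$ that defines $\alpha^*$.
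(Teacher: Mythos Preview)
Your Step 1 is correct and is exactly the paper's doubling argument: iterating $f(r2^{k+1})/(r2^{k+1}) \geq f(r2^k)/(r2^k) - \Delta(r2^{k+1})/(r2^{k+1})$ yields your telescoping sum $\sum_j \Delta(r2^j)/(r2^j) = O((\log r)^{-\e})$, and this is precisely where the hypothesis $\Delta(n)=O(n/(\log n)^{1+\e})$ enters. Step 2, however, has a genuine gap. In the ``otherwise'' case you split $n = (q-1)N^* + (N^*+r)$ and assert ``both pieces $\geq N^* \geq t(n)$ by (ii)''. But (ii) only gives $t(m) \leq m/2$ for $m \geq N^*$; it does \emph{not} give $t(n) \leq N^*$. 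Since $N^*$ is fixed once and for all before sending $n\to\infty$, while $t$ may be unbounded (in the intended application $t(n) = n/(\log n)^{\e/3}$), the inequality $N^* \geq t(n)$ fails for all large $n$, the merge violates the constraint $n_1,n_2 \geq t(n_1+n_2)$, and the argument breaks.

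The paper handles this by letting the block size grow with $n$: it sets $m(n)$ to be the smallest dyadic multiple $r\cdot 2^{j}$ of the base $r$ exceeding a function $\gamma(n)$ chosen with $t(n) \leq \gamma(n) = o(n)$. Your Step~1 still delivers $f(m(n))/m(n) \geq S - \e$ uniformly in $n$, and the sequential decomposition $n = Q\,m(n) + q(n)$ with $q(n) \in [m(n), 2m(n))$ now has every piece $\geq m(n) \geq t(n)$, so all merges are legal; the extra error $\tfrac{1}{n}\sum_k \Delta(q(n)+km(n)) \leq \Delta(n)/m(n)$ is driven to $0$ by choosing $\gamma$ with $\Delta(n)/\gamma(n)\to 0$. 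A quicker patch of your own write-up: take $q = \lfloor (n - t(n))/N^*\rfloor$, which forces the remainder $r = n - qN^* \geq t(n)$ while still giving $qN^*/n \to 1$ since $t(n)=o(n)$ (this variant needs Step~1 for non-dyadic $q$, hence a mildly stronger version of (ii) such as $t(m) \leq m/3$).
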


\noindent
{\bf Remark 3.} This lemma still holds if $\Delta(n)$ is such that $\sum_{n} \frac{\Delta(n)}{n^2} < \infty$. 
\begin{proof}
Let $\e >0$.
Since $f(n)/n$ is bounded above, we have $S := \lim \sup_n f(n)/n < \infty$.
Let $n_0$ large enough such that $f(n)/n < S + \e$ for any $n \geq n_0$.
Let $r$ be an integer (to be chosen at our convenience but to be kept fixed) and let $\gamma(n)$ such that $\gamma(n) \geq t(n)$ and $\gamma(n)=o(n)$. Define $$m(n) = \inf \{ r 2^k : r 2^k \geq \gamma (n) \}$$
and note that $ \gamma(n) \leq m(n) \leq 2 \gamma(n) \vee r$.
For any $n$, we then have 
$$n= \lfloor n/m(n) -1 \rfloor m(n) + q(n),$$
where $q(n) \in [m(n), 2m(n)]$, with $\lfloor n/m(n) -1 \rfloor \geq 0$ for $n$ large enough.
Hence, using the property of $f$, we have
\begin{align*}
f(n) \geq  \lfloor n/m(n) -1 \rfloor f(m(n)) + f(q(n)) - \sum_{k=1}^{\lfloor n/m(n) -1 \rfloor} \Delta (q(n)+ k m(n)), 
\end{align*}
and since $f$ is positive,
\begin{align}
\frac{f(n)}{n} \geq  \frac{m(n)}{n} \lfloor n/m(n) -1 \rfloor \frac{f(m(n))}{m(n)}   -\frac{1}{n} \sum_{k=1}^{\lfloor n/m(n) -1 \rfloor} \Delta (q(n)+ k m(n)). \label{tot}
\end{align}
Since $\gamma (n) = o(n)$,  for $n$ large enough, we have $$\frac{m(n)}{n} \lfloor n/m(n) -1 \rfloor > 1-\e.$$
We now show that for $n$ large enough, we also have $\frac{f(m(n))}{m(n)} > S - \e$. First, note that we can take $r$ large enough such that $r 2^k / 2 \geq \gamma (r 2^k)$ for all $k \geq 0$, since $\gamma (n) = o(n)$.
Hence, using the property of $f$, we have 
$$\frac{f(r 2^{k+1})}{r 2^{k+1}} \geq  \frac{f(r 2^k )}{r 2^{k}} - \frac{1}{r 2^{k+1}} \Delta (r 2^{k+1}) $$
and 
\begin{align}
 \frac{f(r 2^{k})}{r 2^{k}} \geq  \frac{f(r  )}{r } - \sum_{j=1}^k \frac{1}{r 2^{j}} \Delta (r 2^{j}) . \label{ugly}
\end{align}
Since we can pick $r$ at our convenience, note that if $r$ is a power of 2, 
$$\sum_{j=1}^k \frac{1}{r 2^{j}} \Delta (r 2^{j}) = \sum_{j = \log_2 r+1}^{\log_2 r +k} \frac{1}{2^{j}} \Delta ( 2^{j}),$$
which is, when $r$ increases, tending to zero uniformly in $k$, provided that 
$$\sum_{j=1}^\infty \frac{1}{ 2^{j}} \Delta ( 2^{j}) < \infty.$$
Since previous condition follows from our hypothesis on $\Delta$,
and since we can always take $r$ large enough to ensure that $f(r)/r > S-\e$,  
we can take $r$ large enough such that, from \eqref{ugly}, we have for any $k$  
$$ \frac{f(r 2^{k})}{r 2^{k}} \geq  S -  \e$$
and since $m(n)$ is of the form $r2^k$, for any $n$
$$ \frac{f(m(n))}{m(n)} \geq  S -  \e.$$
Finally, we need to show that the last term in \eqref{tot} is vanishing, i.e., that
$$\frac{1}{n} \sum_{k=1}^{\lfloor n/m(n) -1 \rfloor} \Delta (q(n)+ k m(n)) \stackrel{n\rightarrow \infty}{\longrightarrow} 0.$$
For this, we pick $\gamma(n)$ to be large enough. For example, if $\Delta = O(n/(\log n)^{1+\e})$, we have
\begin{align}
\frac{1}{n} \sum_{k=1}^{\lfloor n/m(n) -1 \rfloor} \Delta (q(n)+ k m(n)) \leq \frac{1}{n} \frac{n}{m(n)} \Delta (n),
\end{align} 
and since $m(n) \geq \gamma (n)$, if $\gamma (n)= O(n/(\log n)^{1+\nu})$ with $\nu > \e$, we conclude the proof. In general, we pick $\gamma(n)$ such that $ \Delta (n)/\gamma(n)=o(1)$. 
\end{proof}

\begin{proof}[Proof of Theorem \ref{convpsat}.] 
Let $F_n = F(n,\alpha)$.
Note that
\begin{align*}
& \E \log (1 + Z( F_{n})) = \E [\log (1 + Z( F_{n})) ,Z (F_{n}) \geq 1] 
\end{align*}
and
\begin{align*}
&\E [\log (1 + Z( F_{n})) ,Z (F_{n}) \geq 1] \\
& = \E [\log Z (F_{n}), Z( F_{n} )\geq 1]+\E [\log (1+Z(F_n)^{-1} ) , Z (F_{n}) \geq 1].
\end{align*}
Let $c>0$, we have
\begin{align*}
& \E [\log (1+Z(F_n)^{-1} ) , Z (F_{n}) \geq 1] \leq \E [Z(F_n)^{-1}  , Z (F_{n}) \geq 1] \\
 & \leq \E [Z(F_n)^{-1}  , Z (F_{n} )\geq 1, Z_{FV} \geq c n  ] + \pp\{ Z_{FV} < c n ,   Z (F_{n}) \geq 1 \} 
\end{align*}
where $Z_{FV}$ is the number of free variables in $F_n$. Therefore, 
$\E [Z(F_n)^{-1}  , Z (F_{n} )\geq 1, Z_{FV} \geq c n  ] \leq e^{-cn}$. Moreover, there exists $c, c_2>0$ such that
$$ \pp\{ Z_{FV} < c n, Z(F_{n}) \geq 1  \} =O(e^{-c_2 n}) $$
hence there exists $\xi>0$ such that
\begin{align}
& \tau(n) := \E [\log (1+Z(F_n)^{-1} ) , Z (F_{n}) \geq 1] = O(e^{- \xi n}). \label{claim}
\end{align} 
On the other hand, we have
(denoting by $F_{n_1}$, $F_{n_2}$ two independent formulas and letting $n=n_1+n_2$) 
\begin{align*}
& \E \log (1 + Z( F_{n_1}) Z(F_{n_2})) =\E[ \log(1+ Z (F_{n_1}) Z(F_{n_2})) , Z (F_{n_1}) Z(F_{n_2}) \geq 1]
\end{align*}
and
\begin{align*}
&\E[ \log(1+ Z (F_{n_1}) Z(F_{n_2})) , Z( F_{n_1})Z(F_{n_2}) \geq 1] \geq \E[ \log( Z (F_{n_1})Z(F_{n_2})) , Z (F_{n_1}) Z(F_{n_2}) \geq 1].
\end{align*}
Hence, using Lemma \ref{guerra}, we get the following inequality
\begin{align*}
& \E [\log  Z (F_{n}), Z (F_{n}) \geq 1]+ \tau(n) \geq \E[ \log( Z( F_{n_1}) Z(F_{n_2})) , Z (F_{n_1}) Z(F_{n_2}) \geq 1]
\end{align*}
or equivalently 
\begin{align}
& g(n)+ \tau(n) \geq g(n_1) + g(n_2) - g(n_1) \e (n_2) - g(n_2) \e (n_1) , \quad \forall n_1,n_2 \geq k \label{start}
\end{align}
where 
\begin{eqnarray*}
 g(n) &=& \E [\log Z (F_{n}), Z (F_{n}) \geq 1], \\
  \e (n) &=& \pp \{Z( F_n) = 0 \}.
\end{eqnarray*} 
Note that $0 \leq g(n) \leq n$. Therefore \eqref{start} implies
\begin{align}
& g(n)+ \tau(n) \geq g(n_1) + g(n_2) - n_1 \e (n_2) - n_2 \e (n_1), \quad \forall n_1,n_2 \geq k. \label{gb}
\end{align}
Since $\alpha < \alpha^*$, we have that $\e(n) = O(1/(\log n)^{1+\e})$, for some $\e >0$.
We then restrict ourself to $$n_1,n_2 \geq t(n) := n/(\log n)^\eta, \quad \text{with } \eta = \e/3.$$ 
This implies that $n^{1-\eta} \leq n_2$ and $(1-\eta) \log n \leq \log n_2$. 
So, for $n_1, n_2$ large enough, we have
\begin{align}
& n_1 \leq n_2 (\log n_2)^{2 \eta} \\
& n_2 \leq n_1 (\log n_1)^{2 \eta}.
\end{align}
Going back to \eqref{gb}, we get
\begin{align*}
& g(n)+ \tau(n) \geq g(n_1) + g(n_2) - n_1 (\log n_1)^{2 \eta} \e (n_1) - n_2 (\log n_2)^{2 \eta}  \e (n_2), \quad \forall n_1,n_2 \geq n/(\log n)^{2 \eta} 
\end{align*}
or equivalently
\begin{align}
& f(n)+\Delta(n) \geq f(n_1) + f(n_2) , \quad \forall n_1,n_2 \geq t(n)
\end{align}
where
\begin{eqnarray*}
f(n)&=&g(n) -n (\log n)^{2 \eta} \e (n) \\
\Delta(n)&=&n (\log n)^{2 \eta} \e (n)+ \tau(n)\\
t(n)&=& n/(\log n)^{\eta}.
\end{eqnarray*}
But $ \e - 2 \eta = \e/3 >0$, hence $$\Delta (n) \leq O\left(\frac{n}{(\log n)^{1 + \e/3}}\right),$$ 
and we satisfy the hypothesis of Lemma \ref{anal}, which implies that $f(n)/n$ converges, hence $g(n)/n$ converges too. 
\end{proof}

\subsection{Proofs of Lemma \ref{conva} and Theorem \ref{res}}\label{proof3}

\begin{proof}[Proof of Lemma \ref{conva}.] 
From Theorem \ref{convpsat}, 
for every $\alpha < \alpha^*$, $\psi_n(\alpha)$ converges to a limit $\phi_s(\alpha)$. 
Note that $\phi_s(\,\cdot\,)$ is a non-increasing function on $\A$, hence, it has a countable number of plateaus and discontinuities.  
Let $\alpha_0 \in \A$ 
and denote $\phi_0 = \phi_s(\alpha_0)$.
If $\alpha_n (\phi_0)$ does not converge, define  
$$\underline{\alpha}_0=\lim \inf_{n \rightarrow \infty} \alpha_n (\phi_0)  ,$$
$$n_k \nearrow \infty \text{ s.t. } \lim_{k \rightarrow \infty} \alpha_{n_k} (\phi_0) = \underline{\alpha}_0,$$
$$\bar{\alpha}_0=\lim \sup_{n \rightarrow \infty} \alpha_n (\phi_0) ,$$
$$m_k \nearrow \infty \text{ s.t. } \lim_{k \rightarrow \infty} \alpha_{m_k} (\phi_0) = \bar{\alpha}_0.$$
Then, for any $\alpha \in (\underline{\alpha}_0,\bar{\alpha}_0)$, there exists $\e>0$ such that 
\begin{align}
P_{m_k}(\alpha,\phi_0) \leq P_{m_k} ( \alpha_{m_k} (\phi_0)-\e, \phi_0)
\stackrel{k \nearrow \infty}{\rightarrow} 0 \label{lim1}
\end{align}
and
\begin{align}
P_{n_k}(\alpha, \phi_0) \geq P_{n_k} (\alpha_{n_k} (\phi_0)+\e, \phi_0)
\stackrel{k \nearrow \infty}{\rightarrow} 1 \label{lim2}
\end{align}
Moreover, if $\alpha < \alpha^*$,
$$
P_{n_k}(\alpha, \phi_0)
= \pp \{ \frac{1}{n_k} \log Z(F(n_k,\alpha)) < \phi_0 | Z(F(n_k,\alpha)) \geq 1 \} + o(1),$$
hence 
$$\lim_{k \rightarrow \infty} \E [ \frac{1}{n_k} \log Z(F(n_k,\alpha))  | Z(F(n_k,\alpha)) \geq 1 ] \le \phi_0, $$
i.e., since $\psi_n(\alpha)$ converges to $\phi_s(\alpha)$ from Lemma \ref{convpsat}, $$\phi_s (\alpha) \le \phi_0.$$ Similarly, we have 
$$\lim_{k \rightarrow \infty} \E [ \frac{1}{m_k} \log Z(F(m_k,\alpha))  | Z(F(m_k,\alpha)) \geq 1 ] \ge \phi_0, $$
and
$$\phi_s (\alpha) \ge \phi_0.$$ Therefore, $\phi_0$ is a plateau of $\phi_s(\cdot)$, and since $\phi_s(\cdot)$ has countably many plateaus, there are countably many $\phi_0 \in \phi_s(\A)$, for which $\alpha_n(\phi_0)$ does not converge.  
\end{proof}

\begin{proof}[Proof of Theorem \ref{res}]
From Theorem \ref{convpsat}, there exists a function $\phi_s(\cdot)$, such that for any $\alpha \in [0,\alpha^*)$, we have $\f(\alpha) = \lim_{n \rightarrow \infty} \psi_n(\alpha)$, where $\psi_n(\cdot)$ is defined in Theorem \ref{convpsat}. Let $I:=\f ([0,\alpha^*))$. From Lemma \ref{conva}, there exists a countable set $\mathcal{C} \subseteq I$ and a function $A: \,I \setminus \mathcal{C} \rightarrow [0,\alpha^*)$ such that for any $\phi \in I \setminus \mathcal{C}$, we can define the limit $A(\phi)=\lim_{n \rightarrow \infty} \alpha_n (\phi)$. Note that for any $\phi \in I \setminus \mathcal{C}$, Theorem \ref{fried} implies $\f(A(\phi))=\phi$. 

Now, for any $\alpha \in [0,\alpha^*)$ which is not a discontinuity point of $\f$ (this holds except on a countable subset of $[0,\alpha^*)$), 
and for any $\e >0$, there exists $\e^\prime <\e$ such that $\phi_* :=\f(\alpha) - \e^\prime \in I \setminus \mathcal{C}$ and hence $\alpha_n(\phi_*)$ tends to a limit $A_*$. Note that $A_*>\alpha$, since $\alpha$ is not a discontinuity point of $\f$ and since $\f(A_*) = \phi_*$. Therefore, there exists $\delta >0$ such that 
$$P_n(\alpha, \phi_s (\alpha) - \e) \leq P_n(\alpha, \phi_*) \leq P_n(\alpha_n(\phi_*) - \delta, \phi_*)$$
and we conclude from Theorem \ref{fried} that $P_n(\alpha, \phi_s (\alpha) - \e) \rightarrow 0$ when $n \rightarrow \infty$. With a similar argument, we conclude that $P_n(\alpha, \phi_s (\alpha) + \e) \rightarrow 1$ when $n \rightarrow \infty$. 
\end{proof}

\subsection{Proofs of Lemma \ref{ch1}, Lemma \ref{ch2} and Lemma \ref{guerracsp}}\label{proof4}

\begin{proof}[Proof of Lemma \ref{guerracsp}.] In this proof, we keep $\alpha$ fixed and split the $n$ variables into two sets of $n_1$ and $n_2=n-n_1$ variables, such as $\{1,\ldots,n_1\}$ and $\{n_1+1,\ldots,n\}$.
For convenience, we now work with the interpolated Poisson model.
We construct a random Boolean formula as follows: we first draw independently the integers $M$, $M_1$ and $M_2$ under Poisson distributions of parameters $\alpha n t$, $\alpha n_1 (1-t)$ and $\alpha n_2 (1-t)$ respectively. We then draw independently $M$ clauses from the full system, by picking for each clause the indices of the variables appearing in it independently and uniformly at random within the set of $n$ variables and by picking $\vp$ under $\mu$. We also draw independently $M_i$ clauses from each sub-systems, by picking for each clause the indices of the variables appearing in it independently and uniformly at random within the set of $n_i$ variables and by picking $\vp_i$ under $\mu$. Finally, we take the conjunction of all clauses to create the formula $F_n (t)$. 

Note that the claim of the lemma is equivalent to
\begin{align}
\E \log (1 + Z (F_n(1))) \geq \E \log (1 + Z( F_{n}(0))) , \label{qsub}
\end{align}
which is proved by showing that
$$\frac{\de\phantom{t}}{\de t} \E \log (1 + Z (F_n(t))) \geq 0.$$
An elementary calculation yields
\begin{align}
& \frac{\de\phantom{t}}{\de t} \frac{1}{n} \E \log (1 + Z (F_n(t))) = \alpha  \left[\E_{\vp,I} \E \log (1 + Z (F_n(t) \wedge \vp(x_I))) - \E \log (1 + Z (F_n(t))) \right]  \notag \\
& - \alpha \frac{n_1}{n} \left[\E_{\vp_1,I_1} \E \log (1 + Z (F_n(t) \wedge \vp_1 (x_{I_1}))) - \E \log (1 + Z (F_n(t))) \right] \notag \\& - \alpha \frac{n_2}{n} \left[\E_{\vp_2,I_2} \E \log (1 + Z (F_n(t) \wedge \vp_2(x_{I_2}) )) - \E \log (1 + Z (F_n(t))) \right], \notag 
\end{align}
where $\vp, \vp_1, \vp_2 \iid \mu$, $I \sim U^k$, $I_1 \sim U_1^k$, $I_2 \sim U_2^k$, all independent, and where $U^k$, respectively $U_i^k$, denotes the $k$-th product measure of $U$, respectively $U_i$ (where $U$, resp. $U_i$, denotes the uniform measure on the $n$ variables, resp. $n_i$ variables). Hence, $x_I = (x_{i_1}, \ldots, x_{i_k})$ with $i_1,\ldots, i_k$ iid uniform over the $n$ variables.

We then have
\begin{align}
&\E_{\vp, I} \E \log (1 + Z (F_n(t) \wedge \vp(x_I))) - \E \log (1 + Z (F_n(t)))    = \E_{\vp, I} \E \log \langle   \vp(X_I) \rangle  \notag
\end{align}
where $X$ is uniformly drawn within the augmented solution space $S(F_{n^*}(t))= S(F_{n}(t))\cup \{*\}$, where $*$ is an assignment which returns true on any Boolean functions, and  $\langle\, \cdot\, \rangle$ denotes the expectation with respect to $X$. Note that 
\begin{align}
\E_{\vp, I} \E \log \langle   \vp(X_I) \rangle  = - \E_{\vp, I} \E \sum_{l=1}^\infty  \frac{ \langle \tilde{\vp}(X_I) \rangle^l }{l} . \label{serie}
\end{align}
where $\tilde{\vp} = 1 - \vp$. 
We now introduce the `replicas' $X^{(r)}$, which are independent and
identically distributed copies of $X$. We then have
\begin{align*}
& \langle  \tilde{\vp}(X_I) \rangle^l = \langle \prod_{r=1}^l \tilde{\vp}(X^{(r)}_I) \rangle, \quad \forall l \geq 1. 
\end{align*}
We are done if we can show that for any realizations of the $X^{(r)}$'s and for any $l \geq 1$, 
\begin{align}
& \E_{\vp, I} \prod_{r=1}^l \tilde{\vp}(X^{(r)}_I) - \frac{n_1}{n} \E_{\vp, I_1} \prod_{r=1}^l \tilde{\vp}(X^{(r)}_{I_1}) - \frac{n_2}{n} \E_{\vp, I_2} \prod_{r=1}^l \tilde{\vp}(X^{(r)}_{I_2}) \geq 0. \label{targ}
\end{align}
Note that 
\begin{align}
\E_{\vp, I} \prod_{r=1}^l \tilde{\vp}(X^{(r)}_I) = \E_\vp \E_{\hat{P}} \prod_{r=1}^l \tilde{\vp}(\xi^{(r)}) 
\end{align}
where $\xi^{(1)}, \ldots, \xi^{(l)} \iid \hat{P}$ and where $\hat{P}$ is the empirical distribution of $X^{(1)}_I, \ldots, X^{(l)}_I$, i.e. the distribution on $\{-1,1\}^{kl}$ given by
$$\hat{P}(x_1^{(1)},\ldots, x_k^{(1)}, \ldots, x_1^{(l)}, \ldots, x_k^{(l)})=\prod_{i=1}^k\bar{P}(x_i^{(1)},\ldots, x_i^{(l)}) $$
with 
$$ \bar{P}(x_i^{(1)},\ldots, x_i^{(l)}) = \frac{\# \{ i \in \{1,\ldots,n\}: (X_{i}^{(1)},\ldots, X_{i}^{(l)})=(x_i^{(1)},\ldots, x_i^{(l)}) \}}{n}$$
and similarly
\begin{align}
\E_{\vp, I_s} \prod_{r=1}^l \tilde{\vp}(X^{(r)}_{I_i}) = \E_\vp \E_{\hat{P}_s} \prod_{r=1}^l \tilde{\vp}(\xi_s^{(r)}), \quad s =1,2 
\end{align}
where $\xi_s^{(1)}, \ldots, \xi_s^{(l)} \iid \hat{P}_s$ and where $\hat{P}_s$ is the empirical distribution of $X^{(1)}_{I_s}, \ldots, X^{(l)}_{I_s}$, i.e. the distribution on $\{-1,1\}^{kl}$ given by
$$\hat{P}_s(x_1^{(1)},\ldots, x_k^{(1)}, \ldots, x_1^{(l)}, \ldots, x_k^{(l)})=\prod_{i=1}^k\bar{P}_s(x_i^{(1)},\ldots, x_i^{(l)}), \quad s=1,2 $$
with 
\begin{align*}
& \bar{P}_1(x_i^{(1)},\ldots, x_i^{(l)}) = \frac{\# \{ i \in \{1,\ldots,n_1\}: (X_{i}^{(1)},\ldots, X_{i}^{(l)})=(x_i^{(1)},\ldots, x_i^{(l)}) \}}{n_1}, \\
& \bar{P}_2(x_i^{(1)},\ldots, x_i^{(l)}) = \frac{\# \{ i \in \{n_1+1,\ldots,n\}: (X_{i}^{(1)},\ldots, X_{i}^{(l)})=(x_i^{(1)},\ldots, x_i^{(l)}) \}}{n_2}.
 \end{align*}
Now, using the operator $\Gamma_l$ defined by \eqref{cond1} in H2, i.e., 
\begin{align*}
\Gamma_l: \, M_1(\{-1,1\}^l) \ni \nu \mapsto \E_\vp \E_{Z^{(r)}} \prod_{r=1}^l (1-\vp(Z^{(r)})) 
\end{align*}
where $Z^{(r)}$ are Boolean random vectors of dimension $k$ such that $Z_i=(Z_i^{(1)},\ldots, Z_i^{(l)})$, $i=1,\ldots,k$, are i.i.d. with distribution $\nu$, note that \eqref{targ} is equivalent to
$$\Gamma_l (\bar{P}) - \frac{n_1}{n} \Gamma_l (\bar{P}_1)- \frac{n_2}{n} \Gamma_l (\bar{P}_2) \geq 0,$$
which holds by convexity of $\Gamma_l$, since 
$$ \bar{P} = \frac{n_1}{n} \bar{P}_1 + \frac{n_2}{n} \bar{P}_2.$$
\end{proof}

\begin{proof}[Proof of Lemma \ref{ch1}.]
We have
\begin{align*}
\Gamma_l ( \nu ) &=\E_\vp \E_{Z^{(r)}} \prod_{r=1}^l \tilde{\vp}(Z^{(r)}) \\
&=\E_{\vp} \sum_{z^{(1)}, \ldots, z^{(l)} \in \{-1,1\}^k}  \prod_{r=1}^l \tilde{\vp}(z^{(r)}) \nu^k (z^{(1)},\ldots,z^{(l)})\\
& =\E_{s^{(j)}} \sum_{z^{(1)}, \ldots, z^{(l)} \in \{s^{(1)}, \ldots, s^{(J)}\} }  \prod_{i=1}^k \nu ((z^{(1)})_i,\ldots,(z^{(l)})_i) \\
& = \sum_{i_1,\ldots, i_l \in \{1,\ldots, J\}} [\E_{s^{(j)}_1} \nu(s_1^{(i_1)}, \ldots, s_1^{(i_l)})]^k 
\end{align*}
and $\Gamma_l$ is convex for any $l\geq 1$. 
\end{proof}

\begin{proof}[Proof of Lemma \ref{ch2}.]
We need to check the convexity of 
\begin{align}
\nu \mapsto \E_s \E_{Z^{(r)}} \prod_{r=1}^l (1-\vp_s(Z^{(r)})) 
\end{align}
where $Z^{(r)}$ are Boolean random vectors of dimension $k$ such that $Z_i=(Z_i^{(1)},\ldots, Z_i^{(l)})$, $i=1,\ldots,k$, are i.i.d. with distribution $\nu$, 
$$\vp_s (x) = \1 (\prod_{i=1}^k x_i = s)$$
and 
$$\mu(\vp_1) = \mu(\vp_{-1})=1/2.$$
Note that
\begin{align}
\E_{Z^{(r)}} \prod_{r=1}^l (1-\vp_s(Z^{(r)})) = \pp \{ \prod_{i=1}^k Z_i = -s^l \} 
\end{align}
where $-s^l$ denotes the vector $(-s,\ldots,-s)$ with $l$ components and where $ \prod_{i=1}^k Z_i$ denotes the component-wise product of the vectors $Z_i$. 
Since the $Z_i$ are i.i.d. under $\nu$ and valued in $\{-1,1\}$, and since we are interested in their product, we now work with their Fourier transform. 
For any $Q \subseteq \{1,\ldots, l\}$, let
$$f(Q)=f_{Z_1}(Q) =  \E \prod_{r \in Q} Z_1^{(r)}.$$
Note that we can recover the distribution of $Z_1$ by knowing $f(Q)$ for any $Q$, in particular
$$\pp \{Z_1 = 1^l\} = \sum_{Q \in 2^{[l]}} f(Q)$$
and
$$\pp \{Z_1 = -1^l\} = \sum_{Q \in 2^{[l]}} (-1)^{|Q|} f(Q).$$
Moreover,
$$f_{\prod_{i=1}^k Z_i}(Q) =  f(Q)^k,$$
hence,
\begin{align*}
\E_s \pp \{ \prod_{i=1}^k Z_i = -s^l \} &= 1/2 \sum_{Q \in 2^{[l]}} f(Q)^k +1/2 \sum_{Q \in 2^{[l]}} (-1)^{|Q|} f(Q)^k \\  
& = \sum_{Q \in 2^{[l]} \atop{|Q| \text{ even}}}  f(Q)^k.
\end{align*}
Since $f(Q)$ is linear in $\nu$ (it is the expectation of $\prod_{r \in Q} Z_1^{(r)}$ where $(Z_1^{(1)},\ldots,Z_1^{(l)}) \sim \nu$), the above summation is clearly convex in $\nu$ if $k$ is even.
\end{proof}

%
\bibliographystyle{amsalpha}

\bibliography{gen}

\newcommand{\etalchar}[1]{$^{#1}$}
\providecommand{\bysame}{\leavevmode\hbox to3em{\hrulefill}\thinspace}
\providecommand{\MR}{\relax\ifhmode\unskip\space\fi MR }
\providecommand{\MRhref}[2]{%
  \href{http://www.ams.org/mathscinet-getitem?mr=#1}{#2}
}
\providecommand{\href}[2]{#2}
\begin{thebibliography}{KMRT{\etalchar{+}}07}

\bibitem[ACO08]{CojaAchlioptas}
D.~Achlioptas and A.~Coja-Oghlan, \emph{{Algorithmic Barriers from Phase
  Transitions}}, 49th Annual Symposium on Foundations of Computer Science
  (Philadelphia, PA), October 2008, pp.~793--802.

\bibitem[ANP05]{AchlioetalNature}
D.~Achlioptas, A.~Naor, and Y.~Peres, \emph{{Rigorous Location of Phase
  Transitions in Hard Optimization Problems}}, Nature \textbf{435} (2005),
  759--764.

\bibitem[BGT10]{BayatiInterpolation}
M.~Bayati, D.~Gamarnik, and P.~Tetali, \emph{Combinatorial approach to the
  interpolation method and scaling limits in sparse random graphs}, $42$nd
  Annual ACM Symposium on Theory of Computing (Cambridge, MA), June 2010.

\bibitem[CO09]{Coja09}
A.~Coja-Oghlan, \emph{{A better algorithm for random k-SAT}}, Proc. 36th ICALP,
  2009, pp.~292--303.

\bibitem[CO10]{COPersonal}
\bysame, Personal communication, 2010.

\bibitem[DGGJ03]{DyerComplexity}
M.~Dyer, L.~A. Goldberg, C.~S. Greenhill, and M.~Jerrum, \emph{The relative
  complexity of approximate counting problems}, Algorithmica \textbf{38}
  (2003), 471--500.

\bibitem[dlV92]{DeLaVega}
W.~Fernandez de~la Vega, \emph{{On random 2-SAT}}, manuscript (1992).

\bibitem[FL03]{FranzLeone}
S.~Franz and M.~Leone, \emph{{Replica bounds for optimization problems and
  diluted spin systems}}, J.~Stat.~Phys. \textbf{111} (2003), 535.

\bibitem[FLT03]{FranzLeoneToninelli}
S.~Franz, M.~Leone, and F.L. Toninelli, \emph{{Replica bounds for diluted
  non-Poissonian spin systems}}, J.~Phys. A \textbf{36} (2003), 10967.

\bibitem[Fri99]{Friedgut}
E.~Friedgut, \emph{Sharp thresholds of graph properties, and the $k$-sat
  problem}, J. Amer. Math. Soc. \textbf{12} (1999), 1017--1054, appendix by J.
  Bourgain.

\bibitem[GT02]{GuerraToninelliLimit}
F.~Guerra and F.~L. Toninelli, \emph{{The thermodynamic limit in mean field
  spin glasses}}, Commun.~Math.~Phys. \textbf{230} (2002), 71--79.

\bibitem[KM09]{Kudekar}
S.~Kudekar and N.~Macris, \emph{{Sharp bounds for optimal decoding of
  Low-Density Parity-Check codes}}, IEEE Trans. on Inform. Theory \textbf{55}
  (2009), 4635--4650.

\bibitem[KMRT{\etalchar{+}}07]{OurPNAS}
F.~Krzakala, A.~Montanari, F.~Ricci-Tersenghi, G.~Semerjian, and L.~Zdeborova,
  \emph{{Gibbs States and the Set of Solutions of Random Constraint
  Satisfaction Problems}}, Proc.~Natl.~Acad.~Sci. \textbf{104} (2007),
  10318--10323.

\bibitem[Led01]{Ledoux}
M.~Ledoux, \emph{{The Concentration of Measure Phenomenon}}, American
  Mathematical Society, 2001.

\bibitem[MM09]{MMBook}
M.~M{\'e}zard and A.~Montanari, \emph{{Information, Physics, and Computation}},
  Oxford University Press, Oxford, 2009.

\bibitem[Mon05]{Montanari05}
A.~Montanari, \emph{{Tight bounds for LDPC and LDGM codes under MAP decoding}},
  IEEE Trans. on Inform. Theory \textbf{51} (2005), 3221--3246.

\bibitem[MPZ03]{MezParZec_science}
M.~M{\'e}zard, G.~Parisi, and R.~Zecchina, \emph{Analytic and algorithmic
  solution of random satisfiability problems}, Science \textbf{297} (2003),
  812--815.

\bibitem[MRT09]{MRT09}
A.~Montanari, R.~Restrepo, and P.~Tetali, \emph{{Reconstruction and Clustering
  in Random Constraint Satisfaction Problems}}, CoRR abs/0904.2751, 2009.

\bibitem[MZK{\etalchar{+}}99]{Monasson99Nature}
R.~Monasson, R.~Zecchina, S.~Kirkpatrick, B.~Selman, and L.~Troyansky,
  \emph{Determining computational complexity from characteristic phase
  transitions.}, Nature \textbf{400} (1999), 133--137.

\bibitem[PT04]{PanchenkoTalagrand}
D.~Panchenko and M.~Talagrand, \emph{{Bounds for diluted mean-field spin glass
  models}}, Prob.~Theor.~Rel.~Fields \textbf{130} (2004), 319--336.

\end{thebibliography}
\end{document}